\newcommand{\RR}{\mathbb{R}}
\DeclareMathOperator{\tr}{tr}
\DeclareMathOperator{\prob}{\mathbb{P}}
\DeclareMathOperator{\ex}{\mathbb{E}}
\DeclareMathOperator{\var}{var}
\DeclareMathOperator{\cov}{cov}
\newtheorem{proposition}{Proposition}
\newtheorem{lemma}{Lemma}
\newtheorem{corollary}{Corollary}
\theoremstyle{definition}
\theoremstyle{remark}
\newtheorem*{remark*}{Remark}
\newtheorem{remark}{Remark}
\def\vec#1{{\bf #1}}
\newfont{\bb}{msbm10 scaled 1100}
\begin{document}
\bstctlcite{BSTcontrol}

\title{SINR Statistics \\ of Correlated MIMO Linear Receivers} %

\author{Aris L. Moustakas and Pavlos Kazakopoulos
\thanks{ALM (arislm@phys.uoa.gr) and PK (p.kazakopoulos@phys.uoa.gr) are with the Physics Dept., Athens Univ., 157 84  Athens, Greece. The work in this paper was presented in part in IEEE International Symposium on Information Theory, St. Petersburg, Aug. 2011}}


\maketitle

\begin{abstract}
Linear receivers offer a low complexity option for multi-antenna communication systems. Therefore, understanding the outage behavior of the corresponding SINR is important in a fading mobile environment. In this paper we introduce a large deviations method, valid nominally for a large number $M$ of antennas, which provides the probability density of the SINR of Gaussian channel MIMO Minimum Mean Square Error (MMSE) and zero-forcing (ZF) receivers, with arbitrary transmission power profiles and in the presence of receiver antenna correlations. This approach extends the Gaussian approximation of the SINR, valid for large $M$ asymptotically close to the center of the distribution, obtaining the non-Gaussian tails of the distribution. Our methodology allows us to calculate the SINR distribution to next-to-leading order ($O(1/M)$) and showcase the deviations from approximations that have appeared in the literature (e.g. the Gaussian or the generalized Gamma distribution). We also analytically evaluate the outage probability, as well as the uncoded bit-error-rate. We find that our approximation is quite accurate even for the smallest antenna arrays ($2\times 2$).
\end{abstract}

\begin{IEEEkeywords}
Gaussian approximation, information capacity, large-system limit, multiple-input multiple-output (MIMO) channels.
\end{IEEEkeywords}

\section{Introduction}
\label{Introduction}


Multi-antenna systems have been known \cite{Foschini1998_BLAST1, Telatar1995_BLAST1} to offer considerable advantages, not only at the link-level, providing higher multiplexing gains and increased robustness through diversity, but also at a system-level by allowing a more effective interference mitigation in a multi-user setting. It is therefore no surprise that next generation wireless communications systems will include multi-antenna systems \cite{Standard802.11n} in order to capitalize on these benefits. To obtain the full advantages from multiple antennas, it is necessary to have an optimal receiver structure, which however is quite complex to implement in real systems. Instead, low complexity, albeit suboptimal, linear receivers offer a practical alternative.

Such receivers include the so-called MMSE (minimum mean square error) and the zero-forcing (ZF) receivers, as well as a new class of receivers recently proposed \cite{Muller2001_LowComplexityLinearReceivers} called moment-based receivers. In addition to the simplification due to the linearization of the received signal operation, the received signal may then be iteratively treated to cancel the interference from other antennas. However, in many cases, even this may impose significant complexity. An even simpler receiver structure can be constructed, in which, after the linear spatial equalization the data is decoded in a single-input single-output fashion\cite{Wolniansky1998_VBLAST_Indoors1, Caire2003_LowComplexitySTCoding}. Here we will focus on the latter, especially since we are interested in the cellular context, with separated transmitter antenna arrays in the uplink with a multi-antenna receiver terminal.

The throughput performance depends on the ability of the linear receiver structure to mitigate interference. One very useful method to quantify the performance is through the asymptotic analysis of the signal to interference and noise ratio (SINR) for the receiver in the limit of large antenna numbers using tools from random matrix theory. Its application was initially spearheaded in the context of Direct-Sequence Code-Division-Multiple-Access (DS-CDMA) where the effective channel consists of the matrix of pseudorandom codes. In this direction, the first breakthrough was made by \cite{Tse1999_LinearMultiuserReceiversEffectiveInterferenceEtc, Verdu1999_MIMO1}, who showed that in the infinite matrix-size limit the SINR of a fixed random channel realization converges to its mean. Later, similar results were obtained for more general channels \cite{Peacock2006_UnifiedLinearReceiversRMT, Chaufray2008_CDMA_MMSE_RMT}. More recently, the effectiveness of linear receivers were analyzed in terms of the total throughput from all transmitting nodes in the asymptotic limit \cite{Artigue2010_PrecoderDesignMIMO_MMSE_RMT, McKay2010_MI_MMSE_RMT, Kumar2009_LinearDMT}.

Nevertheless, one often needs to assume that the fading channel is ``quasi-static'', i.e. varies in time much more slowly than the typical coding delay. In this case the channel matrix and hence the SINR have to be considered as random quantities. In this regime, the relevant performance metric is the ``rate (or SINR) versus outage probability'' tradeoff \cite{Biglieri1998_FadingChannels}, captured by the cumulative distribution function of the SINR. This situation is especially relevant in the context of multi-antenna channels, when the number of antennas is usually much smaller than the size of the CDMA codes.

In a seminal work \cite{Tse2000_MMSEFluctuations} the authors proved the asymptotic normality of the SINR for the MMSE and ZF receivers when all transmitters have equal power. The normality of the SINR was later extended to the normality of the multiple access interference (MAI) of CDMA channels \cite{Zhang2001_Gaussian_forMAI_MMSE_DS_CDMA} and a variety of linear receivers \cite{Guo2002_AsymptoticNormalityMMMSE_ZF}. More recently, \cite{Liang2007_MMSEAsymptotics, Kammoun2009_CLT_MMSE_RMT} showed the normality of the MMSE SINR, including the case of the mismatched receiver. Interestingly, \cite{Liang2007_MMSEAsymptotics} showed also that the logarithm of the SINR becomes asymptotically normal. Unfortunately, and in contrast to the total mutual information, the Gaussian approximation for the SINR is extremely inaccurate, unless the number of antennas is quite large. As a result, inspired by the fact that the SINR for the equal power MIMO ZF receiver has a Gamma distribution\cite{Tse2000_MMSEFluctuations, Gore2002_MIMO_ZFReceiver}, several works were devoted to approximating the SINR statistics with other distributions, such as the Beta distribution for the SINR of the CDMA ZF receiver \cite{Muller1997_CDMA_CapacityLinearReceivers, Schramm1999_SpectralEfficiencyMMSEReceivers}, or the Gamma and generalized Gamma distributions \cite{Li2006_MIMO_MMSE_SINR_Distribution, Kammoun2009_BER_Outage_Approximations_MMSE_MIMO, Armada2009_BitLoadingMIMO, Li2011_BER_MIMO_MMSE}, in which case their first three moments were fitted to match the actual SINR distribution. Nevertheless, this methodology, although perhaps providing good agreement under certain conditions, is ad-hoc and does not offer any intuition on the SINR statistics. The same can be argued for the calculation of the exact probability density function (PDF) and the cumulative density function (CDF) of the MMSE SINR \cite{Kiessling2003_ExactMMSE_SINR} using ratios of determinants, a method however which is only valid for uncorrelated channels at the receiver.

In this paper, we take a different approach. Instead of trying to prove Gaussian behavior close to the peak of the distribution of SINR, we develop a large-deviations methodology, which allows us to calculate the distribution of the SINR arbitrarily far from its most probable, mean value. The success of our method lies on the fact that we can exactly express the moment generating function (MGF) of the SINR as the moment generating function of the difference of two correlated MIMO mutual information functions. Taking advantage of the robustness of the Gaussian approximation of the MIMO mutual information we obtained an expression of the MGF of the SINR correct to $O(1/M)$. We are then able to obtain the full distribution of the SINR for both MMSE and ZF with similar precision. It is therefore no surprise that our results are very close to the exact ones down to the smallest MIMO systems ($2\times 2$). It is worth mentioning a related recent work \cite{Moustakas2011_SINR_MMSE_PhysPolonicaB} in which we used the Coulomb Gas method \cite{Majumdar2006_LesHouches} to calculate the leading term $O(M)$ in the exponent of the SINR for uncorrelated channels. In that work we demonstrate that the large deviations tails are determined by the behavior of a single singular value of the channel matrix.

{\em Outline:}
In the next section we present the channel model and introduce the MMSE and the ZF SINR. In Section \ref{sec:technical} we present our analytical results, providing the PDF, CDF and BER for both MMSE and ZF SINRs. In Section \ref{sec:numerical simulations} we demonstrate their validity numerically and we conclude in Section \ref{sec:conclusion}. Appendices \ref{app:proof_gamma_vs_DI}, \ref{app:proof_pdf}, \ref{app:proof_cdf} and \ref{app:proof_ber} contain details on the proofs of Lemma \ref{lemma:gamma_vs_DI},  Propositions \ref{lemma:pdf}, \ref{lemma:cdf} and \ref{lemma:ber}, respectively.

\section{Channel Model}
\label{sec:channel model}

In this section we define the channel model. The receiver array has $M$ antennas, receiving the signal from $K+1$ transmitter arrays, not necessarily collocated. Without loss of generality\footnote{For example, we may assume that the zeroth and the first arrays correspond to the same transmitter} we assume that the 0th transmitter has a single antenna, while the remaining $K$ transmitters have $N_k$ antennas each for $k=1,\ldots,K$. The $M$-dimensional received signal vector $\vec y$ can be written as
\begin{equation}\label{eq:channel_def}
    \vec y = \vec R_0^{1/2}\vec g_0 x_0 + \sum_{k=1}^{K} \vec R_k^{1/2}\vec G_k \vec x_k + \vec z
\end{equation}
In the above equation $\vec z$ is the  noise vector, with complex Gaussian elements $\sim {\cal CN}(0,1)$. The transmitted signal amplitudes $x_0$ and $\vec x_k$ have i.i.d. elements with variance $p_0$ and $p_k \vec I_{N_k}$ respectively, where $p_k$ are the average transmitted power per antenna from the $k$th array with $k=1,\ldots,K$. The channel vector from transmitter 0 is $\vec R_0^{1/2}\vec g_0$, where $\vec g_0$ is an $M$-dimensional vector with i.i.d. entries $\sim {\cal CN}(0,1)$. $\vec R_0$ is the $M$-dimensional receive-side correlation matrix of the channel originating from user 0, normalized so that $\tr \vec R_0=M$. Similarly, the channel matrix from the $k$th user is $\vec R_k^{1/2}\vec G_k$, where $\vec G_k$ is a $M\times N_k$ matrix with i.i.d. elements  $\sim {\cal CN}(0,1/N_k)$ and $\vec R_k$ has the same interpretation and properties as $\vec R_0$. To be concrete, we will assume that all correlation matrices $\vec R_k$, for $k=1,\ldots,K$ are positive semidefinite, while $\vec R_0$ is positive definite. Also, we assume that their eigenvalue spectra converge to proper probability distributions for large $M$. We will be interested in calculating the SINR of transmitter $0$ in the presence of the other transmitters and noise. For notational convenience we also define the matrix $\vec H_0 = [\vec R_1^{1/2} \vec g_1 \sqrt{p_1},\ldots,\vec R_{K}^{1/2} \vec G_{K} \sqrt{p_{K}}]$ as well as the matrix $\vec H=[\vec R_0^{1/2} \vec g_0 \sqrt{p_0}, \vec H_0]$.

This channel model describes a set of transmitting antennas dispersed in a cellular setting with their signal arriving possibly from different mean angles and/or with different angle-spreads at the receiver array, thereby having different receive correlation matrices. Of course, not all correlation matrices need to be different, e.g. if some of the interfering antennas are collocated. To obtain analytic results we will take the limit of large $M$ and $N_k$ ($k=1,\dots,K$), with the ratios
\begin{equation}\label{eq:n_k_def}
    n_k = \frac{N_k}{M},
\end{equation}
as well as the number of arrays $K$ fixed in that limit. In the remainder of the paper the term  ``large $M$ limit'' will denote both $N_k$ and $M$ going to infinity, while keeping the corresponding ratios $n_k$ constant and finite. For notational convenience, we define $N_{tot}=\sum_{k=0}^K N_k$. Despite the assumptions above, we will apply and test our results in the case when $M$ and $K$ are not too large and $N_k=1$.

\subsection{MMSE Receiver}
\label{sec:MMSE_receiver}

The SINR of the 0-th MMSE transmitter above can be expressed as
\begin{eqnarray}\label{eq:SINR_mmse_def}
    \gamma(\vec H) &=& \frac{p_0}{M}\vec g_0^\dagger \vec R_0^{1/2}\vec L^{-1}\vec R_0^{1/2} \vec g_0 \\
    \vec L &=& \vec I_M + \vec H_0 \vec H_0^\dagger
\end{eqnarray}
with the second line serving as the definition of $\vec L$. Our objective is to evaluate the probability density function of $\gamma(\vec H_0)$, omitting the $\vec H_0$ dependence when obvious.

\subsection{ZF Receiver}
\label{sec:ZF Receiver}

The SINR of the zero-forcing (ZF) receiver can be obtained in a similar fashion. In this case, we focus only in the case $M\geq N_{tot}$. Then the SINR for this receiver can be expressed as a limit of the standard MMSE SINR (\ref{eq:SINR_mmse_def}) as follows
\begin{eqnarray}\label{eq:SINR_zf_def}
    \gamma(\vec H) &=& \frac{p_0}{M}\frac{1}{\left[\left\{\vec H^\dagger\vec H\right\}^{-1}\right]_{11}}
    \\ \nonumber
    &=& \frac{p_0}{M} \lim_{z\rightarrow 0^+} z\left(\frac{1}{\left\{\left[\vec I_{N} + z^{-1}\vec H^\dagger \vec H\right]^{-1}\right\}_{11}}-1\right)
    \\ \nonumber
    &=& \frac{p_0}{M} \lim_{z\rightarrow 0^+}  \vec g_0^\dagger \left[\vec I_{M}+z^{-1}\vec H_0\vec H_0^\dagger \right]^{-1}\vec g_0
    \\ \nonumber
    &=& \frac{p_0}{M} \lim_{z\rightarrow 0^+}  z\vec g_0^\dagger \left[z\vec I_{M} +\vec H_0\vec H_0^\dagger \right]^{-1}\vec g_0
\end{eqnarray}
The inverse of the matrix in the right-hand side of the first equality is finite only for $M\geq N_{tot}$ with probability one. The second equality results from taking the $z\rightarrow 0^+$ limit. The third equality above results from the matrix inversion lemma \cite{Verdu_MUD_book}. Following the same argumentation as in Section \ref{sec:MMSE_receiver} we obtain the moment generating function as in (\ref{eq:PDF_Fourier}), with $\vec H_0\vec H_0^\dagger$ in (\ref{eq:DeltaI_def}) replaced by  $z^{-1}\vec H_0\vec H_0^\dagger$. The expression in the fourth line, easily derived from the third, showcases the singular nature of the $z\rightarrow 0$ limit, which focuses on the projection of the kernel of $\vec H_0\vec H_0^\dagger$ to  $\vec g_0$, which, for $M\geq N_{tot}$ is guaranteed to be non-empty. For compactness, below we will continue to use this dummy variable $z$, setting it equal to $z=1$ and $z=0^+$ for the cases of the MMSE and ZF SINR, respectively.

\section{Results}
\label{sec:technical}

In this section we will go through the basic steps of the calculation of the probability distribution (PDF), the outage distribution (CDF) and the BER of the SINR denoted by $\gamma$.  We start with a very useful first result for the moment generating function of $\gamma$.

\begin{lemma}[MGF of $\gamma$]\label{lemma:gamma_vs_DI}
The moment generating function of $\gamma$ for the MMSE (\ref{eq:SINR_mmse_def}) and the ZF case (\ref{eq:SINR_zf_def}) can be written in the following form
\begin{eqnarray}\label{eq:PDF_Fourier}
    g_M(s) &=& \ex_{\vec H_0}\left\{ e^{-\Delta I\left(s, \vec H_0 \right)}\right\}
\end{eqnarray}
where  $\Delta I(s,\vec H_0)$ is given by
\begin{eqnarray}\label{eq:DeltaI_def}
\Delta I(s,\vec H_0) &=& \ln\det\left[z(\vec I_M + s p_0 \vec R_0) + \vec H_0\vec H_0^\dagger\right] \nonumber \\
                      &-& \ln\det\left[z\vec I_M +  \vec H_0\vec H_0^\dagger\right]
\end{eqnarray}
The parameter $z$ takes the $z=1$ for the MMSE SINR (\ref{eq:SINR_mmse_def}) and the limiting value $z=0^+$ for the ZF SINR (\ref{eq:SINR_zf_def}).
\end{lemma}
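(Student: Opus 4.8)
The plan is to use the independence of user~0's channel vector $\vec{g}_0$ from the interference matrix $\vec{H}_0$ and to carry out the Gaussian integral over $\vec{g}_0$ in closed form while conditioning on $\vec{H}_0$. Writing the MGF in the large-deviations scaling $g_M(s)=\ex\{e^{-Ms\gamma}\}$ (so that the exponent is extensive in $M$), I would first observe that the MMSE SINR (\ref{eq:SINR_mmse_def}) is a Gaussian quadratic form, $Ms\gamma=sp_0\,\vec{g}_0^\dagger\vec{R}_0^{1/2}\vec{L}^{-1}\vec{R}_0^{1/2}\vec{g}_0$, whose only remaining randomness once $\vec{H}_0$ (equivalently $\vec{L}$) is fixed is the zero-mean vector $\vec{g}_0\sim\mathcal{CN}(0,\vec{I}_M)$.

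The first key step is the complex Gaussian quadratic-form identity $\ex_{\vec{g}_0}\{e^{-\vec{g}_0^\dagger\vec{A}\vec{g}_0}\}=\det(\vec{I}_M+\vec{A})^{-1}$, valid for any Hermitian $\vec{A}\succeq 0$, the integral converging precisely because $\vec{A}\succeq 0$. Here $\vec{A}=sp_0\,\vec{R}_0^{1/2}\vec{L}^{-1}\vec{R}_0^{1/2}$ is positive semidefinite for $s\ge 0$ since $\vec{R}_0\succ 0$ and $\vec{L}\succ 0$, so the conditional expectation equals $\det(\vec{I}_M+sp_0\,\vec{R}_0^{1/2}\vec{L}^{-1}\vec{R}_0^{1/2})^{-1}$.

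The second step is purely algebraic. Applying Sylvester's identity $\det(\vec{I}+\vec{X}\vec{Y})=\det(\vec{I}+\vec{Y}\vec{X})$ to move one factor of $\vec{R}_0^{1/2}$ through the determinant and then factoring $\det(\vec{L}+sp_0\vec{R}_0)=\det\vec{L}\cdot\det(\vec{I}_M+sp_0\,\vec{L}^{-1}\vec{R}_0)$, the conditional expectation collapses to the determinant ratio $\det\vec{L}\,/\,\det(\vec{L}+sp_0\vec{R}_0)$. Since $\vec{L}=\vec{I}_M+\vec{H}_0\vec{H}_0^\dagger$, this is exactly $e^{-\Delta I(s,\vec{H}_0)}$ with $z=1$, and taking the outer expectation $\ex_{\vec{H}_0}$ proves (\ref{eq:PDF_Fourier}) for the MMSE case.

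For the ZF receiver I would repeat the same two steps on the resolvent representation in the last line of (\ref{eq:SINR_zf_def}), with quadratic-form matrix $\vec{A}=sp_0\,z\,\vec{R}_0^{1/2}(z\vec{I}_M+\vec{H}_0\vec{H}_0^\dagger)^{-1}\vec{R}_0^{1/2}$; the Gaussian integral and Sylvester's identity then give $\det(z\vec{I}_M+\vec{H}_0\vec{H}_0^\dagger)\,/\,\det(z(\vec{I}_M+sp_0\vec{R}_0)+\vec{H}_0\vec{H}_0^\dagger)$, which is $e^{-\Delta I}$ with the stated $z$-dependence. The step I expect to require the most care is the $z\rightarrow 0^+$ limit: one must verify that $z(z\vec{I}_M+\vec{H}_0\vec{H}_0^\dagger)^{-1}$ converges to the orthogonal projector onto $\ker(\vec{H}_0\vec{H}_0^\dagger)$, which is nontrivial almost surely exactly when $M\ge N_{tot}$ (this is where that hypothesis enters), and that the limit commutes with $\ex_{\vec{H}_0}$. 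The interchange itself is easy by dominated convergence, since $0<e^{-\Delta I}\le 1$ uniformly in $z$; the substantive content is the kernel/rank bookkeeping that makes the $z\to 0^+$ limit finite and equal to the ZF SINR.
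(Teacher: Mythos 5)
Your proof is correct and follows essentially the same route as the paper's: condition on $\vec H_0$, carry out the complex Gaussian integral over $\vec g_0$ to get $\det\left(\vec I_M + s p_0 \vec R_0^{1/2}\vec L^{-1}\vec R_0^{1/2}\right)^{-1}$, and use $\det(\vec I + \vec X \vec Y)=\det(\vec I + \vec Y \vec X)$ to rewrite this as the determinant ratio $e^{-\Delta I(s,\vec H_0)}$. Your extra care with the $z\rightarrow 0^+$ limit (the kernel projector for $M\geq N_{tot}$ and dominated convergence to swap the limit with $\ex_{\vec H_0}$) fills in a step the paper simply asserts by saying the ZF case follows from the same argument with $\vec H_0\vec H_0^\dagger$ replaced by $z^{-1}\vec H_0\vec H_0^\dagger$.
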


\IEEEproof{
See Appendix \ref{app:proof_gamma_vs_DI}.
}

\begin{remark}
Once again we see that the limit $z\rightarrow 0^+$ in (\ref{eq:PDF_Fourier}) is not trivial because the matrix $\vec H_0 \vec H_0^\dagger$ has a non-empty kernel.
\end{remark}

\begin{remark}
The usefulness of this result is that it makes the connection of the moment generating function of the SINR to a difference of mutual information functions for the remaining $K$ users. This will allow us to take advantage of the Gaussian behavior of this difference of mutual informations \cite{Moustakas2003_MIMO1, Hachem2006_GaussianCapacityKroneckerProduct} close to their ergodic values, in order to analyze the large deviations of the distribution of $\gamma$ arbitrarily far away from {\em its} ergodic value. Note that the above argument holds for general $\vec H_0$, as long as the logdets difference above remains Gaussian, as e.g. in \cite{Hachem2008_Capacity_INDCorrelation}.
\end{remark}

\begin{figure*}
\begin{center}
\subfigure[PDF of MMSE SINR (dB)]
{\includegraphics[width=0.49 \textwidth]{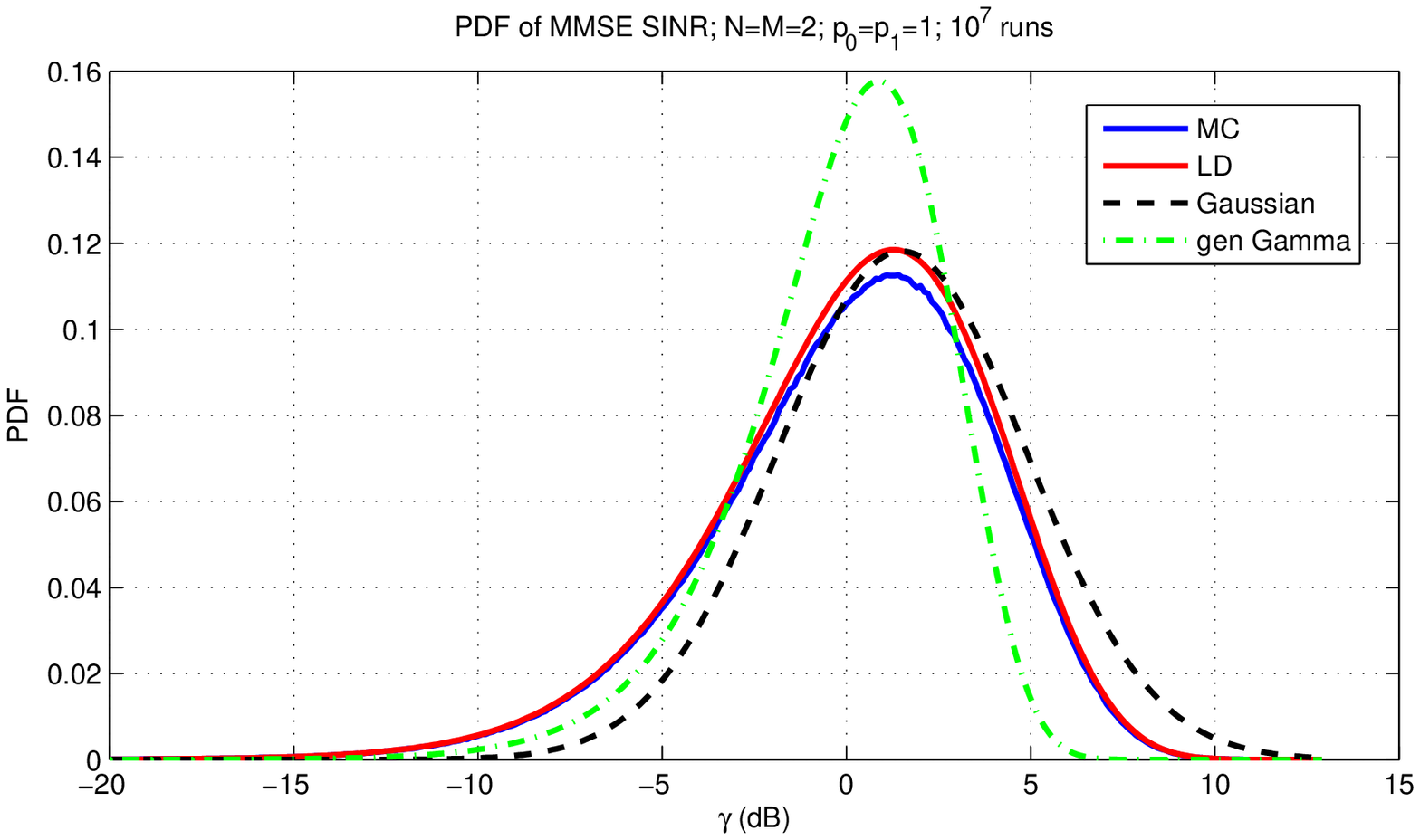}}
\subfigure[CDF of MMSE SINR (dB)]
{\includegraphics[width=0.49 \textwidth]{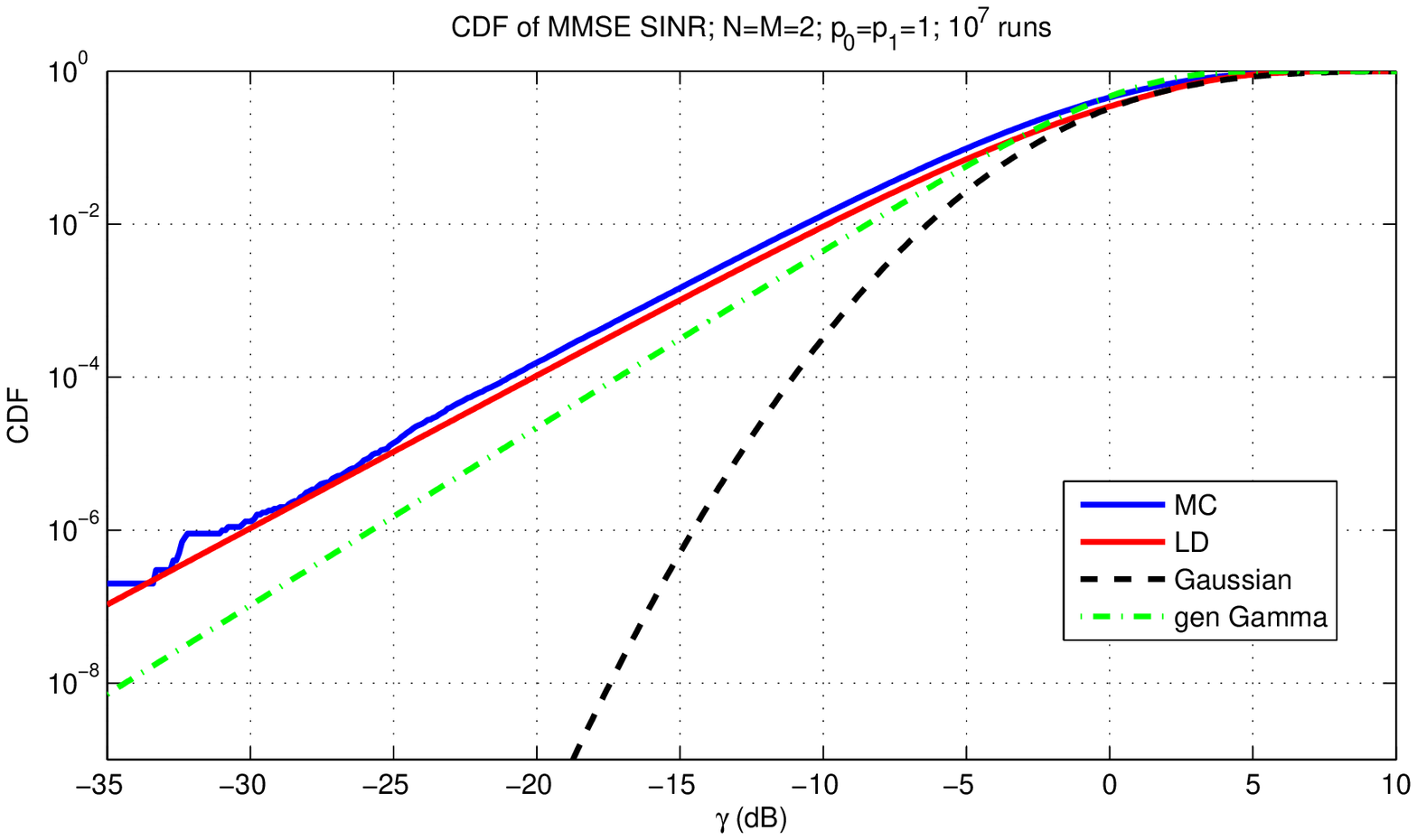}}
\caption{(a) Probability density (PDF) curves and (b) Outage probability (CDF) curves for the MMSE SINR in dB for $N=M=2$, with transmission and interference powers $p_0=2$ and $p_1=1$ respectively. The channel elements are assumed to be iid. We plot the PDF for the Monte Carlo-generated simulations (MC), the LD approximation, the Gaussian approximation and the generalized Gamma approximations. In the Gaussian curves we have used  ${\cal N}(\gamma_{dB, erg}, \sigma^2_{dB})$ with $\gamma_{dB,erg}=10*\log_{10} \gamma_{erg}$ and $\sigma^2_{dB}=\sigma^2_{erg}/( \gamma_{erg}\ln 10)^2$. The ergodic mean $\gamma_{erg}$ and variance $\sigma^2_{erg}$ of the SINR can be calculated directly, see e.g. \cite{Kammoun2009_BER_Outage_Approximations_MMSE_MIMO}. The generalized gamma curves have been plotted using the parameters of the generalized gamma distribution as calculated in \cite{Li2006_MIMO_MMSE_SINR_Distribution}.}
\end{center}
\label{fig:MMSE_curves}
\end{figure*}

\subsection{Derivation of PDF}
\label{sec:PDF}

We will now obtain the probability distribution density of the SINR. This density may be expressed as an expectation of a Dirac $\delta$-function as follows:
\begin{equation}\label{eq:PDF_definition_delta_fn}
    P_M(\gamma) = \ex_{\vec H}\left[ \delta\left(\gamma - \frac{z p_0}{M} \vec g_0^\dagger \left[z\vec I_M + \vec H_0 \vec H_0^\dagger\right]^{-1}\vec g_0\right) \right]
\end{equation}
The parameter $z$ will take the value of $z=1$ for the case of the MMSE SINR introduced in Section \ref{sec:MMSE_receiver}, while, as discussed in Section \ref{sec:ZF Receiver}, the $z=0^+$ limit will correspond to the ZF SINR. The following proposition provides an analytic expression of the probability density of the SINR, valid for all $\gamma>0$ in the large $M$ limit.
\begin{proposition}[PDF of SINR]\label{lemma:pdf}
Let $\overline{P}_M(\gamma)$ be given by
\begin{eqnarray}\label{eq:PDF_sinr1}
    {\overline P}_M(\gamma) &=& \frac{1}{\sqrt{2\pi}} e^{Ms_0\gamma-I_{erg}(s_0)+I_{erg}(0)+\frac{v_1(s_0)+v_2(s_0)}{2}}
\end{eqnarray}
In the above equation, $I_{erg}(s)$ is the ergodic mutual information given by
\begin{eqnarray}\label{eq:Ierg_gen_def}
I_{erg}(s) &=& \tr \ln\left[\vec I_M + \vec R_0 p_0 s+ \sum_{k=1}^{K} \vec R_k r_k(s)\right] \\ \nonumber
&+& \sum_{k=1}^{K} N_k \ln(z + p_k t_k(s)) - \sum_{k=1}^{K} N_k r_k(s) t_k(s) \\
r_k(s) &=& \frac{p_k}{z + p_k t_k(s)} \label{eq:rs_gen_def} \\
t_k(s) &=& \frac{1}{N_k}\tr \left[\vec R_k\left(\vec I_M + p_0 s \vec R_0+\sum_{q=1}^{K} \vec R_q r_q(s) \right)^{-1} \right] \label{eq:ts_gen_def}
\end{eqnarray}
for $k=1,\dots,K$. The parameter  $z$ takes the value $z=1$ for the case of the MMSE SINR and the value $z=0$ for the case of the ZF SINR. The variable $s_0$ in (\ref{eq:PDF_sinr1}) is evaluated through the saddle-point equation
\begin{eqnarray}\label{eq:gamma_solution}
\gamma &=& \frac{1}{M}I_{erg}'(s_0) \\ \nonumber
&=&p_0 \frac{1}{M} \tr \left[\vec R_0\left(\vec I_M + p_0 s_0 \vec R_0+\sum_{k=1}^{K} \vec R_k r_k(s_0) \right)^{-1} \right]
\end{eqnarray}
$I_{erg}'(s)$ is the derivative of $I_{erg}(s)$ with respect to $s$. The expressions of the $O(1)$ terms $v_1(s_0)$ and $v_2(s_0)$ are given in Appendix \ref{app:proof_pdf}.
$I_{erg}(0)$ is obtained by setting $s=0$ in $I_{erg}(s)$, $t_k(s)$, $r_k(s)$ above.

Then for every $\gamma>0$, the probability density converges weakly to $\overline{P}_M(\gamma)$ in the sense that
\begin{equation}\label{eq:PDF_convergence}
    \lim_{M\rightarrow\infty} M\left|P_M(\gamma)-\overline{P}_M(\gamma)\right|<\infty
\end{equation}
\end{proposition}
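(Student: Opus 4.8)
The plan is to recover the density from its moment generating function by contour integration and then to evaluate that integral by steepest descent. Lemma~\ref{lemma:gamma_vs_DI} identifies $g_M(s)=\ex_{\vec H_0}\{e^{-\Delta I(s,\vec H_0)}\}$ as the moment generating function of $M\gamma$, i.e. $g_M(s)=\ex[e^{-sM\gamma}]$ (the Gaussian integral over $\vec g_0$ in its proof produces exactly the determinant ratio $e^{-\Delta I}$). Hence $g_M$ is a bilateral Laplace transform, and the first step is the Bromwich inversion
\begin{equation}
P_M(\gamma)=\frac{M}{2\pi \iq}\int_{c-\iq\infty}^{c+\iq\infty} g_M(s)\, e^{Ms\gamma}\dd s ,
\end{equation}
valid for a vertical contour inside the analyticity strip of $g_M$; for the ZF case one takes $z\to 0^+$ inside the integral at the end, as flagged in the Remark.

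The second step supplies the asymptotics of $g_M(s)$. Since $\Delta I$ is a \emph{difference} of two correlated MIMO mutual-information log-determinants, the Gaussian-fluctuation results quoted after Lemma~\ref{lemma:gamma_vs_DI} give that $\Delta I$ is asymptotically normal, so its cumulant generating function collapses to mean plus half-variance:
\begin{equation}
\ln g_M(s)=-\big(I_{erg}(s)-I_{erg}(0)\big)+\tfrac12\big(v_1(s)+v_2(s)\big)+\bigoh(1/M).
\end{equation}
The leading $\bigoh(M)$ piece is the ergodic log-determinant, which the standard deterministic-equivalent fixed-point analysis identifies with $I_{erg}(s)$ of (\ref{eq:Ierg_gen_def})--(\ref{eq:ts_gen_def}), while the $\bigoh(1)$ piece is the variance of $\Delta I$, yielding $v_1,v_2$. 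Inserting this into the contour integral, I apply the saddle-point method to the $\bigoh(M)$ exponent $\Phi_M(s)=Ms\gamma-(I_{erg}(s)-I_{erg}(0))$. Its stationarity condition $\Phi_M'(s_0)=0$ is precisely the saddle equation (\ref{eq:gamma_solution}), $\gamma=\tfrac1M I_{erg}'(s_0)$; because $I_{erg}$ is strictly concave in $s$ (the second derivative of a log-determinant in $s$ is negative), $\Phi_M$ is convex on $\RR$, so the root $s_0$ is unique for each $\gamma>0$, and the contour may be deformed to pass vertically through $s_0$, along which $\Phi_M$ attains a maximum. The Gaussian fluctuation integral then contributes the curvature factor $|I_{erg}''(s_0)|$, which, together with the $M/2\pi$ from the inversion, is exactly what is collected into the Appendix expressions for $v_1(s_0),v_2(s_0)$, leaving the overall normalization $1/\sqrt{2\pi}$ and reproducing $\overline{P}_M(\gamma)$ as in (\ref{eq:PDF_sinr1}).

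For the convergence claim (\ref{eq:PDF_convergence}) I must show that each approximation costs only a relative $\bigoh(1/M)$. The first error is the remainder in the Gaussian approximation of $g_M(s)$; this must be controlled \emph{uniformly} along the deformed contour for complex $s$, which requires quantitative CLT / Berry--Esseen-type bounds for the log-determinant difference rather than the bare limit, together with integrable decay of $g_M(s)e^{Ms\gamma}$ away from $s_0$ so that the tails of the contour are negligible. The second error is the first subleading term of the steepest-descent expansion, which is $\bigoh(1/M)$ relative to the leading term provided $I_{erg}''(s_0)$ is bounded away from zero. I expect the uniform control of the Gaussian approximation along the whole contour — reconciling a pointwise $\bigoh(1/M)$ correction to the mutual-information statistics with its use inside an $M$-dependent saddle integral, and handling the singular $z\to 0^+$ limit for ZF where $\vec H_0\vec H_0^\dagger$ has a kernel — to be the main obstacle; the deterministic-equivalent fixed point and the saddle-point geometry are comparatively routine.
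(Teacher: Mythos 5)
Your overall strategy---Bromwich inversion of the MGF from Lemma \ref{lemma:gamma_vs_DI}, a saddle point fixed by (\ref{eq:gamma_solution}), uniqueness from concavity in $s$, and the Gaussian fluctuation theory of $\Delta I$ supplying the $\bigoh(1)$ corrections---is the same family of argument as the paper's, and your inversion formula and saddle equation agree with (\ref{eq:app_PDF_def1}) and (\ref{eq:app_saddle_point_s_eq}). But there are two concrete problems. First, your displayed expansion
\begin{equation*}
\ln g_M(s)=-\big(I_{erg}(s)-I_{erg}(0)\big)+\tfrac12\big(v_1(s)+v_2(s)\big)+\bigoh(1/M)
\end{equation*}
double-counts $v_2$. ``Mean plus half variance'' for the asymptotically Gaussian $\Delta I(s,\vec H_0)$ gives only $-\Delta I_{erg}(s)+\tfrac12 v_1(s)$, exactly as in (\ref{eq:app_PDF4}); $v_2$ is not a cumulant of $\Delta I$ at all, but is defined in (\ref{eq:app_saddle_pt_second_deriv}) through the curvature of the ergodic exponent at the saddle, $\delta I''_{erg}(s_0)\equiv -e^{-v_2(s_0)}$, i.e.\ it is precisely the steepest-descent prefactor that you separately extract from the ``Gaussian fluctuation integral.'' As written, your bookkeeping produces the curvature factor twice; with the MGF expansion corrected to contain only $v_1$, the two steps do combine to (\ref{eq:PDF_sinr1}).

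Second, and more seriously, your order of operations creates the very obstacle you flag at the end. You average over $\vec H_0$ \emph{first}, to obtain an asymptotic formula for $g_M(s)$, and then run a deterministic saddle-point analysis; this requires the CLT-based approximation of $g_M(s)$, with relative error $\bigoh(1/M)$, to hold \emph{uniformly} in complex $s$ along the deformed contour, together with decay estimates far from the saddle---none of which follows from the hardening/CLT results available for $\Delta I$, which concern real (or nearly real) $s$. The paper's proof is structured precisely to avoid this: it integrates over $s$ first, at fixed realization $\vec H_0$, where $f(s,\vec H_0)$ in (\ref{eq:f(s)_def}) is an analytic function with a unique real saddle $s_0(\vec H_0)$; the steepest-descent expansion is performed realization by realization; Corollary \ref{lemma:HardeiningDerivativesDI} (hardening of derivatives) then replaces the saddle location and curvature by their deterministic equivalents; and only at that point is the expectation over $\vec H_0$ taken---of $e^{-\Delta I(s_0,\vec H_0)}$ at the single real, deterministic value $s_0$, after centering by $\Delta I_{erg}(s_0)$, which is exactly where Lemma \ref{lemma:DI_Gaussian} is invoked in (\ref{eq:app_PDF4}). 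In other words, the paper never needs the CLT for complex arguments or uniformly on a contour, only at one real point (and the centering matters: the paper's footnote warns that expectations of exponentials of uncentered, $\bigoh(M)$ quantities are not controlled by the CLT). Your route, as written, leaves its central analytic step unjustified; to repair it along the paper's lines, exchange the expectation with the $s$-integral, carry the random saddle, and invoke hardening before averaging.
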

\begin{proof}
See Appendix \ref{app:proof_pdf}.
\end{proof}
\begin{remark}
As it will become clear in the appendix, this result means that for large $M$ the PDF of the SINR becomes asymptotically equal with $\overline{P}_M(\gamma)$, up to corrections of $O(1/M)$.
\end{remark}
\begin{remark}
The solution of (\ref{eq:rs_gen_def}), (\ref{eq:ts_gen_def}) has been shown to be unique for the case of the MMSE SINR ($z=1$) \cite{Moustakas2003_MIMO1, Couillet2009_CapacityAnalysisMIMO, Dupuy2010_CapacityAchievingCovarianceFrequencySelectiveMIMOChannels}. To show that this is also the case for the $z\rightarrow 0^+$ limit, we observe that we can rewrite  (\ref{eq:rs_gen_def}) as $r_k(s)=(t_k(s)+z/p_k)^{-1}$. Hence we may view the ZF $z\rightarrow 0$ limit as the MMSE solution with $p_k\rightarrow\infty$ limit ($k=1,\ldots K$). Since the MMSE ($z=1$) solution for the $r_k(s)$, $t_k(s)$ is continuous with respect the values of $p_k$, we may take the MMSE $r_k(s)$, $t_k(s)$ solutions in the limit of large $p_k$, and then plug them in (\ref{eq:Ierg_gen_def}), setting also $z=0$.
\end{remark}

Also, note that the most probable value of $\gamma$ corresponds to the solution of (\ref{eq:gamma_solution}) for $s_0=0$. This involves the joint solution of (\ref{eq:ts_gen_def}), (\ref{eq:rs_gen_def}), which gives the correct value of the ergodic SINR \cite{Kammoun2009_BER_Outage_Approximations_MMSE_MIMO, Moustakas2009_VTC_MMSE_CorrMIMO_Capacity}. Expanding the leading $O(M)$ term in the exponent of the PDF (i.e. the first three terms) to second order in $(\gamma-\gamma_{erg})$ provides the Gaussian approximation of the PDF of the SINR. Furthermore, since $\overline{P}_M(\gamma)$ in (\ref{eq:PDF_sinr1}) is valid for all positive $\gamma$, not necessarily close to the ergodic value, it can provide the tails of the distribution accurately.

\begin{figure*}
\begin{center}
\subfigure[PDF of ZF SINR (dB)]
{\includegraphics[width=0.49 \textwidth]{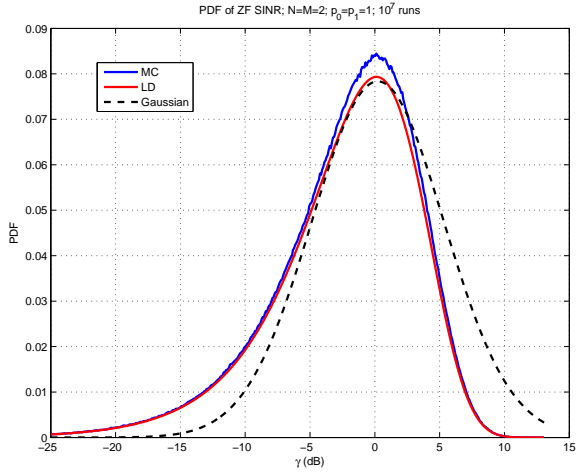}}
\subfigure[CDF of ZF SINR (dB)]
{\includegraphics[width=0.49 \textwidth]{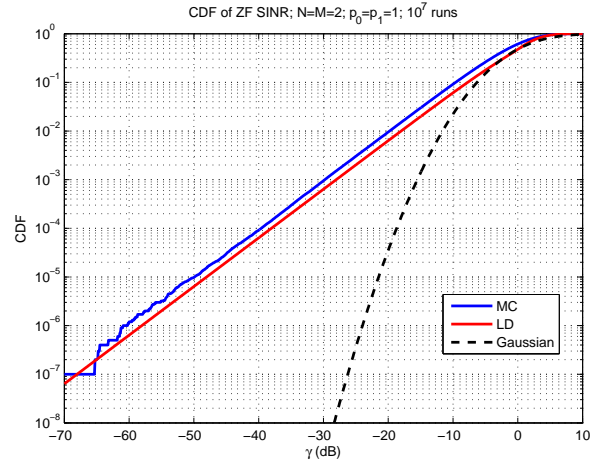}}
\caption{(a) Probability density (PDF) curves and (b) Outage probability (CDF) curves for the ZF SINR in dB for $N=M=2$, with signal and interference powers $p_0=2$, $p_1=1$, respectively. The angles of arrival of the signal and  interference paths are $\theta_0=0^o$ and $\theta = 45^o$, respectively, measured from the vertical of the receive antenna array. The angle-spreads of both paths at the receiver are $\sigma_{as}=30^o$. As in Fig. 1
, the LD curve using this approximation is consistently closer to the Monte-Carlo generated curve (MC). The way the Gaussian curve of generated is identical to Fig. 1%
\ref{fig:MMSE_curves}.}
\end{center}
\label{fig:ZF_curves}
\end{figure*}

In \cite{Moustakas2011_SINR_MMSE_ISIT} we derived a simplified expression for the case when all correlation matrices are identical. This result can also be obtained from the above analysis by setting $N_1=K$ and all other $N_k=0$, and $\vec R_0=\vec R_1 = \vec R$, while $\vec g_0\sim {\cal CN}(0,1/K)$, $\vec G_1\sim {\cal CN}(0,diag(p_1,\ldots,p_K)/K)$.
\begin{corollary}{\cite{Moustakas2011_SINR_MMSE_ISIT}}\label{lemma:pdf_special}
 Let each of the $K+1$ transmitters have a single antenna with same correlation matrix at the receiver given by $\vec R$. Then in the limit of large $K$, and $N$ with $q=K/N$ fixed the expressions (\ref{eq:Ierg_gen_def}), (\ref{eq:rs_gen_def}), (\ref{eq:ts_gen_def}) are simplified to
\begin{eqnarray}\label{eq:Ierg_special}
I_{erg}(s) &=& \tr \ln\left[\vec I_M + \vec R \left(p_0 s+ r(s)\right)\right] \\ \nonumber
&+& \sum_{k=1}^{K} \ln(z + p_k t(s)) - K r(s) t(s) \\
r(s) &=& \frac{1}{K}\sum_{k=1}^{K}\frac{p_k}{z + p_k t(s)} \label{eq:rs_def} \\
t(s) &=& \frac{1}{K}\sum_{i=1}^M\frac{R_i}{1 + R_i(p_0 s + r(s))} \label{eq:ts_def}
\end{eqnarray}
where $R_i$, ($i=1,\ldots,M$) are the eigenvalues of the matrix $\vec R$, while $z$ takes the value $z=1$ for the MMSE case and $z=0$ for the ZF case. As a result, (\ref{eq:gamma_solution}) simply becomes
\begin{eqnarray}\label{eq:gamma_solution_special}
\gamma &=& p_0 t(s_0)
\end{eqnarray}
with the corresponding expressions for $I_{erg}(0)$, $r(0)$, $t(0)$ resulting from setting $s=0$ to (\ref{eq:Ierg_special}), (\ref{eq:rs_def}), (\ref{eq:ts_def}), respectively. The expressions of $v_1(s_0)$, $v_2(s_0)$ are also accordingly simplified (see Appendix \ref{app:proof_pdf}).
\end{corollary}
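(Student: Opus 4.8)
The plan is to derive the corollary as a specialization of Proposition~\ref{lemma:pdf} under the embedding stated just above it: the $K$ single-antenna interferers are merged into a single interfering block of size $N_1=K$ (with all other $N_k=0$), sharing the receive correlation $\vec R_1=\vec R_0=\vec R$ but now carrying the \emph{non-scalar} transmit covariance $\text{diag}(p_1,\ldots,p_K)$, together with the stated rescalings $\vec g_0\sim{\cal CN}(0,1/K)$ and $\vec G_1\sim{\cal CN}(0,\text{diag}(p_1,\ldots,p_K)/K)$. The corollary is then an algebraic reduction of the deterministic-equivalent system (\ref{eq:rs_gen_def})--(\ref{eq:ts_gen_def}) and the ergodic exponent (\ref{eq:Ierg_gen_def}). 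Before the algebra I would settle the one genuinely non-routine point: that the Gaussian/saddle-point analysis behind Proposition~\ref{lemma:pdf} survives when the per-block transmit covariance is $\text{diag}(p_k)$ rather than a multiple of $\vec I$. This is standard for the correlated-MIMO log-det and merely turns the scalar transmit-side relation into a trace over $\text{diag}(p_k)$.

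First I would perform the receive-side collapse. Since $\vec R_0=\vec R_1=\vec R$ and only one block survives, the matrix argument $\vec I_M+p_0 s\vec R_0+\sum_k\vec R_k r_k(s)$ in (\ref{eq:ts_gen_def}) reduces to $\vec I_M+\vec R\,(p_0 s+r(s))$ with a single scalar $r(s)$, and the family $\{t_k(s)\}$ collapses to the single quantity $t(s)=\frac1K\tr[\vec R(\vec I_M+\vec R(p_0 s+r(s)))^{-1}]$, which is exactly (\ref{eq:ts_def}) once the trace is written as $\frac1K\sum_i R_i/(1+R_i(p_0 s+r(s)))$. Next I would treat the transmit side: with covariance $\text{diag}(p_k)$ the determinant term $\sum_k N_k\ln(z+p_k t_k)$ of (\ref{eq:Ierg_gen_def}) becomes $\ln\det(z\vec I_K+t(s)\,\text{diag}(p_k))=\sum_k\ln(z+p_k t(s))$, the coupling term becomes $K\,r(s)\,t(s)$, and the scalar relation (\ref{eq:rs_gen_def}) becomes $r(s)=\frac1K\tr[\text{diag}(p_k)(z\vec I_K+t(s)\,\text{diag}(p_k))^{-1}]=\frac1K\sum_k p_k/(z+p_k t(s))$, i.e.\ (\ref{eq:rs_def}). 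Assembling the three pieces reproduces (\ref{eq:Ierg_special}).

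For the saddle point I would differentiate the reduced $I_{erg}$ and exploit that $r(s)$ and $t(s)$ are stationary (they solve the fixed-point equations, so their $s$-derivatives cancel in $I_{erg}'$); only the explicit dependence through $p_0 s\vec R$ remains, giving $I_{erg}'(s)=p_0 K\, t(s)$. Inserting this into (\ref{eq:gamma_solution}) and using the per-user ($1/K$) normalization appropriate to the large-$K$, $q=K/N$ fixed limit then yields the compact form $\gamma=p_0 t(s_0)$ of (\ref{eq:gamma_solution_special}); the $O(1)$ terms $v_1,v_2$ reduce under the same substitutions. The ZF case $z\to0^+$ is handled exactly as in the remark after Proposition~\ref{lemma:pdf}: writing $r(s)=\frac1K\sum_k(t(s)+z/p_k)^{-1}$, the $z\to0^+$ limit coincides with the $p_k\to\infty$ limit of the MMSE solution taken term by term.

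I expect the main obstacle to be bookkeeping rather than anything conceptual: tracking the $1/K$ channel-variance rescalings together with the $M$-versus-$K$ normalizations consistently through the large-$K$ limit, so that the load factor $q$ cancels and (\ref{eq:gamma_solution}), whose general form carries a $1/M$, collapses cleanly to $\gamma=p_0 t(s_0)$; and, secondarily, confirming the non-scalar-covariance extension of Proposition~\ref{lemma:pdf} flagged above.
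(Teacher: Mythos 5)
Your proposal is correct and follows essentially the same route as the paper: the paper's entire justification for this corollary is precisely the embedding you use ($N_1=K$, all other $N_k=0$, $\vec R_0=\vec R_1=\vec R$, $\vec g_0\sim{\cal CN}(0,1/K)$, $\vec G_1\sim{\cal CN}(0,\mathrm{diag}(p_1,\ldots,p_K)/K)$), stated in the paragraph preceding the corollary, with the algebraic reduction left to the reader and the ISIT reference. Your fleshed-out version of that reduction --- the collapse of (\ref{eq:ts_gen_def}) to (\ref{eq:ts_def}) via $1/N_1=1/K$, the diagonal-transmit-covariance extension of (\ref{eq:rs_gen_def}) (correctly flagged as the only non-routine point), the envelope argument giving $I_{erg}'(s)=Kp_0t(s)$, and the $1/K$ rather than $1/M$ normalization in the saddle-point equation induced by the rescaled $\vec g_0$ --- is exactly the bookkeeping the paper implicitly assumes.
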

To be able to compare the obtained distribution of the MMSE SINR with other proposed distributions \cite{Liang2007_MMSEAsymptotics, Kammoun2009_CLT_MMSE_RMT, Kammoun2009_BER_Outage_Approximations_MMSE_MIMO, Li2006_MIMO_MMSE_SINR_Distribution}, it is instructive to further simplify the assumptions. In particular, we have the following
\begin{corollary}
In the case of equal power transmit antennas $p_0=p_k = \rho$ and  uncorrelated receiver antennas $\vec R=\vec I_{M}$, the result simplifies and, to leading order in $N$ takes the following simple form:
\begin{eqnarray}\label{eq:PDF_P=R=I}
    \overline{P}_M(\gamma) &\propto&  e^{-K\gamma/\rho}\frac{\gamma^M}{\left(z+\gamma\right)^{K}}
\end{eqnarray}
\end{corollary}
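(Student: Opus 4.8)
The plan is to specialize the single-antenna, common-correlation Corollary \ref{lemma:pdf_special} to the fully symmetric case $\vec R=\vec I_M$ and $p_0=p_k=\rho$, and then to evaluate the saddle-point exponent of (\ref{eq:PDF_sinr1}) in closed form. To leading order only the $O(M)$ part of the exponent, $Ms_0\gamma-I_{erg}(s_0)$, need be retained: the prefactor $1/\sqrt{2\pi}$, the $O(1)$ corrections $v_1(s_0),v_2(s_0)$, and the $\gamma$-independent constant $I_{erg}(0)$ all collapse into the overall proportionality constant. Because every eigenvalue of $\vec R$ equals one, the fixed-point functions (\ref{eq:rs_def}), (\ref{eq:ts_def}) reduce to the scalars $t(s)=\tfrac{M}{K}(1+\rho s+r(s))^{-1}$ and $r(s)=\rho/(z+\rho t(s))$, and (\ref{eq:Ierg_special}) becomes $I_{erg}(s)=M\ln(1+\rho s+r(s))+K\ln(z+\rho t(s))-Kr(s)t(s)$.

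The key simplification is that the saddle-point equation (\ref{eq:gamma_solution_special}), $\gamma=\rho\,t(s_0)$, renders everything explicit. First I would read off $t(s_0)=\gamma/\rho$, which immediately gives $r(s_0)=\rho/(z+\gamma)$ from the $r$-equation and $1+\rho s_0+r(s_0)=M\rho/(K\gamma)$ from the $t$-equation; the latter identity pins down $s_0$ as an explicit rational function of $\gamma$, eliminating the implicit fixed point entirely. Substituting these three quantities into $I_{erg}(s_0)$ yields $I_{erg}(s_0)=M\ln(M\rho/(K\gamma))+K\ln(z+\gamma)-K\gamma/(z+\gamma)$, while $Ms_0\gamma$ follows directly from the expression for $s_0$.

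Forming $Ms_0\gamma-I_{erg}(s_0)$ is then bookkeeping. The logarithmic piece reorganizes via $M\ln(M\rho/(K\gamma))=M\ln(M\rho/K)-M\ln\gamma$, so $-I_{erg}(s_0)$ contributes $+M\ln\gamma$ (the factor $\gamma^M$) and $-K\ln(z+\gamma)$ (the factor $(z+\gamma)^{-K}$); the rational term $Kr(s_0)t(s_0)=K\gamma/(z+\gamma)$ cancels against its counterpart in $Ms_0\gamma$, and the terms linear in $\gamma$ assemble into the exponential rate $-K\gamma/\rho$ of (\ref{eq:PDF_P=R=I}). Every surviving $\gamma$-independent quantity ($M^2/K$, $-M\ln(M\rho/K)$, and so on) is absorbed into the proportionality constant. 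For the ZF case one simply sets $z=0$ from the start—no limiting procedure is needed because $z+\gamma=\gamma>0$ stays regular for all $\gamma>0$—giving the pure Gamma form $\propto e^{-K\gamma/\rho}\gamma^{M-K}$.

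The main obstacle is this final algebraic collapse: one must track carefully which of the several $O(M)$ terms genuinely depend on $\gamma$ and which are constants to be discarded, and in particular verify that the $\gamma/(z+\gamma)$ contributions cancel and that the coefficient of the linear-in-$\gamma$ term reduces to exactly $-K/\rho$, so that only the clean $\gamma^M/(z+\gamma)^K$ structure survives. Everything upstream—solving the scalar fixed point and inserting it into $I_{erg}$—is mechanical once the symmetry $\vec R=\vec I_M$, $p_k=\rho$ has collapsed the matrix equations to scalars.
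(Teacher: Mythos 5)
Your overall strategy (specialize Corollary \ref{lemma:pdf_special}, solve the scalar fixed point explicitly at the saddle, and absorb all $\gamma$-independent terms into the normalization) is the natural one, and your intermediate quantities are correct: $t(s_0)=\gamma/\rho$, $r(s_0)=\rho/(z+\gamma)$, $1+\rho s_0+r(s_0)=M\rho/(K\gamma)$, and $I_{erg}(s_0)=M\ln\left(M\rho/(K\gamma)\right)+K\ln(z+\gamma)-K\gamma/(z+\gamma)$. However, the final bookkeeping does not close as you assert, because you pair the exponent $Ms_0\gamma-I_{erg}(s_0)$ with the saddle-point equation $\gamma=\rho\,t(s_0)$, and these two are mutually inconsistent. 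By the envelope theorem applied to (\ref{eq:Ierg_special})--(\ref{eq:ts_def}) one has $I_{erg}'(s)=K\rho\,t(s)$, so stationarity of $Ms\gamma-I_{erg}(s)$ would force $M\gamma=K\rho\,t(s_0)$, not $\gamma=\rho\,t(s_0)$. Concretely, with your ingredients $Ms_0\gamma=\frac{M^2}{K}-\frac{M\gamma}{\rho}-\frac{M\gamma}{z+\gamma}$, so the $\gamma/(z+\gamma)$ terms do \emph{not} cancel (an $O(M)$ term $-(M-K)\gamma/(z+\gamma)$ survives) and the linear term is $-M\gamma/\rho$, not $-K\gamma/\rho$; what your displayed formulas actually give is
\begin{equation*}
\overline{P}_M(\gamma)\;\propto\;\frac{\gamma^{M}}{(z+\gamma)^{K}}\,
\exp\left(-\frac{M\gamma}{\rho}-\frac{(M-K)\gamma}{z+\gamma}\right),
\end{equation*}
which coincides with (\ref{eq:PDF_P=R=I}) only in the square case $M=K$.

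The missing observation is that Corollary \ref{lemma:pdf_special} is stated in the rescaled conventions introduced just above it ($\vec g_0\sim{\cal CN}(0,1/K)$, $\vec G_1\sim{\cal CN}(0,\mathrm{diag}(p_1,\ldots,p_K)/K)$): with channel variance $1/K$, the MGF variable $s$ is conjugate to $K\gamma$ rather than $M\gamma$, so in this convention the exponent of (\ref{eq:PDF_sinr1}) reads $Ks_0\gamma-I_{erg}(s_0)+I_{erg}(0)+\cdots$, which is precisely what makes its stationary point coincide with (\ref{eq:gamma_solution_special}), $\gamma=\rho\,t(s_0)$. Rerunning your computation with $Ks_0\gamma=M-\frac{K\gamma}{\rho}-\frac{K\gamma}{z+\gamma}$, the $K\gamma/(z+\gamma)$ contributions now do cancel and the $\gamma$-dependent part of the exponent collapses to $-K\gamma/\rho+M\ln\gamma-K\ln(z+\gamma)$, i.e.\ exactly (\ref{eq:PDF_P=R=I}); for $z=0$ this also reproduces, to leading order, the exact Gamma law $\propto e^{-K\gamma/\rho}\gamma^{M-K-1}$ of \cite{Gore2002_MIMO_ZFReceiver}, which is a useful sanity check that the rate must be $K/\rho$ and not $M/\rho$. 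So the proof is repairable by a one-symbol change ($M\to K$ in the conjugation factor), but as written the claimed cancellations fail and the stated conclusion does not follow.
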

This extremely simple result is quite remarkable. Although for large $M$ and close to the ergodic value of $\gamma$ this equation will behave approximately as a normal distribution, for general values of $\gamma$ this is far from a Gaussian or generalized Gamma-distribution. This is partly the reason why all efforts to approximate the distribution of $\gamma$ using a central limit theorem approach have largely failed, at least for relatively small values of $M$. At the same time, when $z=0$, the above distribution becomes exactly a Gamma distribution as shown in \cite{Gore2002_MIMO_ZFReceiver}.

\subsection{Outage Distribution of $\gamma$}
\label{sec:outage}

Using the expressions of the probability density $P_M(\gamma)$ from the previous section we may now evaluate the asymptotic expression of the outage probability of the SINR $P_{M,out}(\gamma_0)=\prob(\gamma(\vec H)<\gamma_0)$. It turns out that it can be evaluated using the information obtained thus far. In particular, we have
\begin{proposition}[Outage Probability]\label{lemma:cdf} Let $\overline{P}_{M,out}(\gamma)$ be given by
\begin{eqnarray}\label{eq:CDF_sinr1}
    \overline{P}_{M,out}(\gamma) &=& e^{Ms_0\gamma-I_{erg}(s_0)+I_{erg}(0)-\frac{s_0^2}{2}I''_{erg}(s_0)+\frac{v_1(s_0)}{2}} \nonumber \\
    &\cdot& Q\left(\sqrt{|I_{erg}''(s_0)|s_0^2}\right)
\end{eqnarray}
for $\gamma<\gamma_{erg}$ and
\begin{eqnarray}\label{eq:CDF_sinr2}
    1-\overline{P}_{M,out}(\gamma) &=& e^{Ms_0\gamma-I_{erg}(s_0)+I_{erg}(0)-\frac{s_0^2}{2}I''_{erg}(s_0)+\frac{v_1(s_0)}{2}} \nonumber \\
    &\cdot& Q\left(\sqrt{|I_{erg}''(s_0)|s_0^2}\right)
\end{eqnarray}
when $\gamma\geq\gamma_{erg}$. $I_{erg}''(s_0)$ is the second derivative of $I_{erg}(s)$ with respect to $s$. $Q(x)$ is defined as $Q(x)=\int_x^\infty dx e^{-x^2/2}/\sqrt{2\pi}$. The definitions of $s_0$, $I_{erg}(s_0)$ and $v_1(s_0)$ can be found in Proposition \ref{lemma:pdf} and Appendix \ref{app:proof_pdf}. The dependence of $s_0$ on $\gamma$ can be obtained through (\ref{eq:gamma_solution}). $\gamma_{erg}$ corresponds to the value of $\gamma$ in (\ref{eq:gamma_solution}) when $s=0$. The parameter $z=1$ for the MMSE (\ref{eq:SINR_mmse_def}) case and $z=0^+$ for the ZF (\ref{eq:SINR_zf_def}) case.

Then for every $\gamma>0$, the outage probability function converges to $\overline{P}_{M,out}(\gamma)$ in the sense that
\begin{equation}\label{eq:CDF_convergence}
    \lim_{M\rightarrow\infty} M\left|\prob_M(\gamma(\vec H)<\gamma)-\overline{P}_{M,out}(\gamma)\right|<\infty
\end{equation}

\end{proposition}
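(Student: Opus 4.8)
The plan is to obtain the outage probability by integrating the asymptotic probability density $\overline{P}_M$ of Proposition \ref{lemma:pdf}, i.e.\ to write $\prob_M(\gamma(\vec H)<\gamma)=\int_0^\gamma P_M(\gamma')\,d\gamma'$ and replace $P_M$ by $\overline{P}_M$ from (\ref{eq:PDF_sinr1}), keeping track of the $O(1/M)$ error guaranteed by (\ref{eq:PDF_convergence}). The central algebraic fact I would establish first concerns the leading exponent $\phi(\gamma')\equiv Ms_0(\gamma')\gamma'-I_{erg}(s_0(\gamma'))+I_{erg}(0)$, where $s_0(\gamma')$ solves the saddle-point equation (\ref{eq:gamma_solution}). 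Differentiating and using (\ref{eq:gamma_solution}) to cancel the terms proportional to $ds_0/d\gamma'$ gives the clean identities $\phi'(\gamma')=Ms_0(\gamma')$ and $\phi''(\gamma')=M^2/I_{erg}''(s_0)$. Since $I_{erg}(s)$ is concave ($I_{erg}''<0$), the density $\overline{P}_M$ is unimodal with maximum at $\gamma'=\gamma_{erg}$, the point $s_0=0$; moreover $s_0>0$ for $\gamma'<\gamma_{erg}$ and $s_0<0$ for $\gamma'>\gamma_{erg}$.

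For $\gamma<\gamma_{erg}$ the integrand increases monotonically on $[0,\gamma]$, so the integral is dominated by a neighborhood of the upper endpoint $\gamma'=\gamma$. I would therefore expand $\phi$ to second order about $\gamma'=\gamma$, set $u=\gamma'-\gamma$, and extend the lower limit to $-\infty$ (the discarded tail is exponentially small). This reduces the problem to the boundary Gaussian integral $\int_{-\infty}^{0}\exp\{Ms_0u+\tfrac12 (M^2/I_{erg}'')u^2\}\,du$. Completing the square produces two structural objects at once: the exponential shift $\exp\{-\tfrac{s_0^2}{2}I_{erg}''(s_0)\}$, and, after rescaling, the Gaussian tail integral $\tfrac{\sqrt{2\pi}\sqrt{|I_{erg}''|}}{M}\,Q(\sqrt{|I_{erg}''(s_0)|\,s_0^2})$. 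These reproduce exactly the extra exponent term and the $Q$-factor in (\ref{eq:CDF_sinr1}). The case $\gamma\ge\gamma_{erg}$ is handled identically by evaluating $1-\prob_M(\gamma(\vec H)<\gamma)=\int_\gamma^\infty P_M(\gamma')\,d\gamma'$: now $s_0<0$ and the integrand is dominated by the lower endpoint, and because the $Q$-argument depends only on $s_0^2$ the same closed form (\ref{eq:CDF_sinr2}) results.

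The step I expect to be the crux is the bookkeeping of the $O(1)$ prefactors, which explains why the density carries $\tfrac{v_1+v_2}{2}$ in (\ref{eq:PDF_sinr1}) while the outage formula carries only $\tfrac{v_1}{2}$. The term $v_2(s_0)$ is precisely the Gaussian fluctuation determinant of the saddle point in the $s$-contour used to invert the moment generating function in the proof of Proposition \ref{lemma:pdf}; schematically $e^{v_2(s_0)/2}\propto M/\sqrt{|I_{erg}''(s_0)|}$. The Laplace integration in $\gamma'$ carried out above contributes the reciprocal factor $\sqrt{|I_{erg}''(s_0)|}/M$. I would show that these two fluctuation factors cancel exactly, leaving the outage probability with neither an $I_{erg}''$ prefactor nor $v_2$, but only $\tfrac{v_1}{2}$ in the exponent together with the $Q$-function, as stated. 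Securing this cancellation exactly, rather than merely up to a constant, is what pins down the next-to-leading accuracy.

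Finally, to prove the convergence statement (\ref{eq:CDF_convergence}) I would combine two sources of error: the $O(1/M)$ density error from (\ref{eq:PDF_convergence}), which integrates to an $O(1/M)$ error in the CDF, and the error from truncating the Taylor expansion of $\phi$ and from extending the integration limit to infinity. The latter is controlled by the curvature $\phi''=O(M)$ and the monotonicity established above, and is again $O(1/M)$ relative to the retained term for $\gamma$ bounded away from $\gamma_{erg}$. The delicate region is the crossover $\gamma\to\gamma_{erg}$, where $s_0\to0$, the $Q$-argument tends to $0$ and $Q\to\tfrac12$; here the boundary-Laplace estimate must be shown to match smoothly onto the bulk Gaussian behavior of $\overline{P}_M$ so that the single formula (\ref{eq:CDF_sinr1})--(\ref{eq:CDF_sinr2}) remains uniformly valid, and I would treat this transition separately to secure (\ref{eq:CDF_convergence}).
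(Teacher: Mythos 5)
Your proposal is correct and follows essentially the same route as the paper: both integrate the asymptotic density of Proposition \ref{lemma:pdf} and perform an endpoint (boundary) Laplace expansion, using $ds_0/d\gamma = 1/\delta I_{erg}''(s_0)$ to get the quadratic term, treating $\gamma<\gamma_{erg}$ and $\gamma\geq\gamma_{erg}$ via the complementary integral, and observing that the resulting $Q$-function formula interpolates between the bulk regime $|\gamma-\gamma_{erg}|=O(1)$ and the Gaussian crossover $\sqrt{M}|\gamma-\gamma_{erg}|=O(1)$. Your explicit bookkeeping of the cancellation between the $e^{v_2/2}$ fluctuation factor in the density and the $\sqrt{|I_{erg}''|}/M$ factor from the endpoint Gaussian integral is exactly the "some algebra" the paper's appendix compresses, and it is carried out correctly.
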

\begin{proof}
See Appendix \ref{app:proof_cdf}.
\end{proof}

\subsection{Evaluation of Average BER}
\label{sec:ber}

In addition to the outage probability, another important metric of performance for the linear receivers is the average uncoded bit-error probability (BER). This can be expressed as an average over $\gamma$ of  $P_e(\gamma)$, the bit-error probability conditioned on the channel realization, which for different modulations can be expressed as
\begin{eqnarray}\label{eq:MQAM_def}
    P_e(\gamma) = \left\{
                \begin{array}{cc}
                    Q(\sqrt{2\gamma}) & BPSK \\
                    Q(\sqrt{\gamma}) & QPSK \\
                    \frac{2}{\log_2 L} Q(\sqrt{\frac{3\gamma}{L-1}}) & L-QAM
                  \end{array}
                  \right.
\end{eqnarray}
where the latter expression holds approximately for large $L$ \cite{Armada2009_BitLoadingMIMO}. The average BER is given by the following
\begin{proposition}[Average BER]\label{lemma:ber}
Define the following function
\begin{eqnarray}\label{eq:lemma_ber1}
    \overline{BER}_M &=& \frac{b}{2} e^{I_{erg}(0)-I_{erg}(\frac{a}{2M})+\frac{1}{2}v_1(\frac{a}{2M})+\frac{a}{2M}I_{erg}'(\frac{a}{2M})}\\ \nonumber
    &\cdot& \Gamma\left(\frac{1}{2},\frac{a}{2M}I_{erg}'(\frac{a}{2M})\right)
\end{eqnarray}
where $\Gamma(x,y)=\int_y^\infty dt t^{x-1} e^{-t}/\Gamma(x)$ is the normalized incomplete $\Gamma$-function and $I_{erg}(s)$, $I_{erg}'(s)$ and $v_1(s)$ are the ergodic mutual information, its derivative with respect to $s$, and the variance defined in (\ref{eq:Ierg_gen_def}), (\ref{eq:gamma_solution}) and (\ref{eq:app_variance}) respectively. The above function and parameters are defined both for the MMSE ($z=1$) and the ZF $z=0$) receiver cases. Also the parameters $a,b$ describing the modulation are defined in (\ref{eq:MQAM_def}).

Then if $BER_M$ is the average uncoded bit-error rate of the MMSE (\ref{eq:SINR_mmse_def}) and the ZF (\ref{eq:SINR_zf_def}) receivers, in the limit of large $M$ we have
\begin{equation}\label{eq:BER_convergence}
    \lim_{M\rightarrow\infty} M\left|BER_M-\overline{BER}_M\right|<\infty
\end{equation}
\end{proposition}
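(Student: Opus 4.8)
The plan is to reduce the average BER to the moment generating function already furnished by Lemma~\ref{lemma:gamma_vs_DI} and then evaluate a single elementary integral. First I observe that all three cases in (\ref{eq:MQAM_def}) share the common form $P_e(\gamma)=b\,Q(\sqrt{a\gamma})$, so that $BER_M=b\,\ex_\gamma[Q(\sqrt{a\gamma})]$. The key step is to invoke Craig's representation $Q(\sqrt{a\gamma})=\tfrac{1}{\pi}\int_0^{\pi/2}\exp\!\big(-\tfrac{a\gamma}{2\sin^2\theta}\big)\,d\theta$, which turns the $Q$-function into an exponential in $\gamma$. Exchanging the (nonnegative) integrals by Tonelli and recognising the inner $\gamma$-average as the moment generating function of Lemma~\ref{lemma:gamma_vs_DI} (recall $g_M(s)=\ex[e^{-sM\gamma}]$) evaluated at $sM=\tfrac{a}{2\sin^2\theta}$, I obtain the clean identity
\[
  BER_M=\frac{b}{\pi}\int_0^{\pi/2} g_M\!\Big(\frac{a}{2M\sin^2\theta}\Big)\,d\theta .
\]
This is the representation on which I would build the whole proof, since it lets me insert the $O(1/M)$-accurate approximation of $g_M$ directly.

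Next I would substitute the Gaussian (cumulant-truncated) form $g_M(s)=\exp\!\big(I_{erg}(0)-I_{erg}(s)+\tfrac12 v_1(s)\big)\,(1+O(1/M))$, which follows from the asymptotic normality of the log-det difference $\Delta I$ in (\ref{eq:DeltaI_def}) exactly as in the proof of Proposition~\ref{lemma:pdf} (mean $I_{erg}(s)-I_{erg}(0)$, variance $v_1(s)$, higher cumulants $O(1/M)$). Changing variables to $u=\cot\theta$, so that $\tfrac{a}{2M\sin^2\theta}=\tfrac{a}{2M}(1+u^2)$ and $d\theta=-du/(1+u^2)$, the integral becomes $\tfrac{b}{\pi}\int_0^\infty g_M\!\big(\tfrac{a}{2M}(1+u^2)\big)\,\tfrac{du}{1+u^2}$. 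Taylor-expanding $I_{erg}$ and $v_1$ about $s_\ast=\tfrac{a}{2M}$, the linear term produces exactly $-I_{erg}'(s_\ast)\tfrac{a}{2M}u^2=-\beta u^2$ with $\beta:=\tfrac{a}{2M}I_{erg}'(\tfrac{a}{2M})=O(1)$; crucially the quadratic remainder is $\tfrac12 I_{erg}''(s_\ast)(\tfrac{a}{2M})^2u^4=O(u^4/M)$ because $I_{erg}''=O(M)$, and the variation of $v_1$ is $O(u^2/M)$, so both are absorbed into the $O(1/M)$ error.

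The residual integral is elementary: $\int_0^\infty \tfrac{e^{-\beta u^2}}{1+u^2}\,du=\tfrac{\pi}{2}e^{\beta}\,\mathrm{erfc}(\sqrt{\beta})=\tfrac{\pi}{2}e^{\beta}\,\Gamma(\tfrac12,\beta)$, where I use that the normalised incomplete Gamma at $x=\tfrac12$ equals $\mathrm{erfc}(\sqrt\beta)$. Assembling the prefactor $\tfrac{b}{\pi}\exp\!\big(I_{erg}(0)-I_{erg}(\tfrac{a}{2M})+\tfrac12 v_1(\tfrac{a}{2M})\big)$ with $\tfrac{\pi}{2}e^{\beta}\Gamma(\tfrac12,\beta)$ reproduces (\ref{eq:lemma_ber1}) verbatim, since $\beta=\tfrac{a}{2M}I_{erg}'(\tfrac{a}{2M})$ is precisely the extra exponent and the incomplete-Gamma argument appearing there.

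The hard part is the error bookkeeping needed for the stated $O(1/M)$ rate in (\ref{eq:BER_convergence}). Two points require care. First, the multiplicative $(1+O(1/M))$ in the MGF approximation must hold \emph{uniformly} in $s$ over the whole range $[\tfrac{a}{2M},\infty)$ swept by $\tfrac{a}{2M\sin^2\theta}$; this is the same uniform-normality input that underlies Proposition~\ref{lemma:pdf}, which I would cite rather than re-derive. Second, because $\beta=O(1)$ the maximiser at $u=0$ has $O(1)$ width, so this is not a sharpening Laplace integral but a dominated-convergence argument: I would bound the integrand by an integrable envelope (using monotonicity of $I_{erg}$ in $s$ to control the large-$u$, i.e.\ large-$s$, tail, where the approximation of $g_M$ is weakest but the integrand is exponentially suppressed), pass the limit under the integral, and keep the leading $1/M$ correction explicit. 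Establishing this uniform control together with the integrable dominating function is the main obstacle; once it is in place the remaining algebra is routine.
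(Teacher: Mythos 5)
Your proposal is correct, and after the first step it coincides with the paper's proof; the only genuine difference is how the key integral representation of the BER is obtained. The paper starts from $BER_M=b\int_0^\infty P_M(\gamma)Q(\sqrt{a\gamma})\,d\gamma$, inserts the Fourier-inversion formula for $P_M(\gamma)$ in terms of $g_M$, and then deforms the contour around the branch cut of $1/\sqrt{1-i2sM/a}$ to arrive at
$BER_M=\frac{b}{2\pi}\int_0^\infty \frac{dt}{\sqrt{t}(1+t)}\,g_M\!\left(\frac{a(1+t)}{2M}\right)$,
whereas you use Craig's representation of the $Q$-function plus Tonelli to get
$BER_M=\frac{b}{\pi}\int_0^\infty g_M\!\left(\frac{a}{2M}(1+u^2)\right)\frac{du}{1+u^2}$;
these are the \emph{same} identity under $t=u^2$ (equivalently $u=\cot\theta$), so from that point on the two arguments are identical: both insert the $O(1/M)$-accurate Gaussian form of the MGF from (\ref{eq:app_PDF4}), expand the exponent to linear order in the argument with coefficient $\beta=\frac{a}{2M}I_{erg}'(\frac{a}{2M})=O(1)$, and evaluate the same elementary integral, $\int_0^\infty e^{-\beta t}\,dt/(\sqrt{t}(1+t))=\pi e^{\beta}\,\mathrm{erfc}(\sqrt{\beta})=\pi e^{\beta}\,\Gamma(\tfrac12,\beta)$, recovering (\ref{eq:lemma_ber1}). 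What your route buys is elementarity and rigor in the reduction step: Craig's formula keeps everything real and nonnegative, so the interchange of integrals is immediate, and one avoids justifying the contour deformation and the $s+i0^+$ prescription. What the paper's route buys is that it works directly from the already-established inversion formula for $P_M(\gamma)$, requiring no additional identity for $Q$. Your final paragraph on error control (uniformity of the $(1+O(1/M))$ factor in $s$ over the whole integration range, and a dominating envelope using monotonicity of $I_{erg}$ to handle the large-argument tail) addresses a point the paper passes over silently with its one-line ``$(1+O(M^{-1}))$''; note also the paper's own caveat that strict $O(1/M)$ bookkeeping would permit replacing the arguments $\frac{a}{2M}$ of $I_{erg}$, $I_{erg}'$, $v_1$ by $0$, the choice $\frac{a}{2M}$ being kept because it is numerically more accurate.
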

\begin{proof}
See Appendix \ref{app:proof_ber}.
\end{proof}

\section{Numerical Simulations}
\label{sec:numerical simulations}


To test the applicability of this approach, we have performed a series of numerical simulations and have compared our large deviations (LD) approach with Monte Carlo (MC) simulations, the Gaussian approximation and the generalized gamma approximation by \cite{Li2006_MIMO_MMSE_SINR_Distribution, Armada2009_BitLoadingMIMO, Kammoun2009_BER_Outage_Approximations_MMSE_MIMO, Li2011_BER_MIMO_MMSE}. We start with the simpler case where no correlations are present in the receiver side using different powers for the transmit antennas. In Fig. 1 
we plot the probability density (PDF) and the outage probability (CDF) of the MMSE SINR in dB for the $2\times 2$ antenna case. The PDF curve of our large deviations (LD) approach is consistently closer to the Monte-Carlo (MC) numerical curves. The same is true also for the outage curves even for such small antenna arrays.

In Fig. 2 
we plot the PDF and CDF curves for the zero-forcing (ZF) SINR in dB for the $2\times 2$ antenna case, using different correlation matrices $\vec R$ for the two transmitter paths. In particular, we parameterize the correlation matrix elements using the mean angle of arrival $\theta$, as measured from the vertical of the antenna array, and a Gaussian angle-spread $\sigma_{as}$ as follows:
\begin{equation}\label{eq:R_ij}
    R_{ab} = C\int_{-\pi}^{\pi} d\phi e^{2\pi i d_{ab}\sin(\phi)/\lambda} e^{-\frac{(\phi-\theta)^2}{2\sigma_{as}^2}}
\end{equation}
where $\lambda$ is the carrier wavelength, $d_{ab}$ is the distance between antennas $a,b$, taken to be $d_{ab}=(a-b)\lambda/2$ and $C$ a normalization to ensure $R_{aa}=1$. Using the above notation, the angles of arrival of the signal and the interferer are $\theta_0=0^o$ and $\theta= 45^o$, respectively, while all angle spreads are taken to be $\sigma_{as}=30^o$. In this case, we also see very good agreement with the Monte-Carlo curves.

Finally, in Fig. 3 
we test our predictions of the uncoded BER, both for MMSE and ZF. In Fig. 3(a) 
we take uncorrelated receivers and compare to Monte-Carlo simulations and the generalized gamma approximation. We see that at large SNRs, the generalized gamma distribution deviates up to several dB. In contrast, our LD approximation is quite close to the numerical curve. We see similar behavior for our approximation in the ZF case. In Fig 3(b) we plot the BER as a function of angle-of arrival of the signal path, in the presence of two interfering paths, for several angle-spreads and receive array sizes. We find that low angle-spreads lead to deterioration of the BER when the signal path has the same direction of arrival as the interfering paths. In addition, we find that lower angle-spreads increase the BER away from the interferers' direction. This last observation is due to the fact that higher angle-spread leads to higher diversity and hence reduced outage probability. Interestingly, an angle-spread of just $\sigma_{as}=5^o$ is enough to make two interference paths separated by $30^o$ practically indistinguishable for $M=6, N=3$.

\section{Conclusion}
\label{sec:conclusion}

In this paper we have used a large deviation approach to calculate the key statistics of the SINR, i.e. PDF, outage probability and BER for the MMSE and ZF receivers of the Gaussian MIMO channel with arbitrary receive antenna correlations. Our results agree very well with simulations both close to the peak of the distribution as well as at its tails, where other suggested approximations, such as the Gaussian or the generalized Gamma distributions are inaccurate. As a technical byproduct, we have found an exact relationship between the SINR distribution and the moment generating function of a difference of related mutual informations. Remarkably, the accuracy of the calculated distribution, even at its tails, is a by-product of the robustness of the Gaussian behavior of the MIMO mutual information. Several direct generalizations are possible. This approach may be generalized to include multi-tap or frequency selective MIMO channels\cite{Moustakas2007_MIMO1}.

\appendices

\section{Proof of Lemma \ref{lemma:gamma_vs_DI}}
\label{app:proof_gamma_vs_DI}

The moment generating function of $\gamma$ is
\begin{equation}\label{eq:MGF_def}
    g_M(s)= \ex_{\vec H} \left [e^{-s p_0 \vec g_0^\dagger \vec R_0^{1/2}\vec L^{-1} \vec R_0^{1/2} \vec g_0}\right]
\end{equation}
We can integrate over $\vec g_0$ to obtain
\begin{eqnarray}\label{eq:app_MGF1}
   g_M(s) &=& \ex_{\vec H_0}\left\{ \int d\vec g_0 e^{-\vec g_0^\dagger\left(\vec I_M+s p_0 \vec R_0^{1/2}\vec L^{-1}\vec R_0^{1/2}\right) \vec g_0}\right\} \nonumber \\
    &=& \ex_{\vec H_0}\left\{ e^{-\ln\det\left[\vec I_M + p_0 s \vec R_0^{1/2}\vec L^{-1}\vec R_0^{1/2} \right]}\right\} \\ \nonumber
    &=& \ex_{\vec H_0}\left\{ e^{-\Delta I\left(s, \vec H_0\right)}\right\}
\end{eqnarray}
where the quantity $\Delta I(s, \vec H_0)$ is exactly (\ref{eq:DeltaI_def}). $\Delta I(s, \vec H_0)$ and therefore $g_M(s)$ will be analytic in $s$ when $ \Re(s)>-\lambda_{min}(\vec H_0)$, where $\lambda_{min}(\vec H_0)$ is the minimum eigenvalue of the matrix $p_0^{-1}\vec R_0^{-1/2}\left(\vec I_M + \vec H_0\vec H_0^\dagger\right)\vec R_0^{-1/2}$. We will assume that in the large $M$ limit $\lambda_{min}(\vec H_0)$ will converge with probability one to a fixed value $\lambda_{min}^*$ and hence for $\Re(s)>-\lambda_{min}^*$, $g_{M\rightarrow \infty}(s)$ is analytic. This has been shown for $K=1$ \cite{Paul2009_NoEigenvaluesOutsideSupport} and is expected to be true for general $K\geq 1$ \cite{Couillet2009_CapacityAnalysisMIMO}.

\section{Proof of Proposition \ref{lemma:pdf}}
\label{app:proof_pdf}

Before discussing some elements of the proof, we introduce the normalized mutual information difference as
\begin{equation}\label{eq:deltaI(s,H0_def)}
    \delta I(s,\vec H_0) = \frac{1}{M} \Delta I(s,\vec H_0)
\end{equation}
where $\Delta I(s,\vec H_0)$ is given  by (\ref{eq:DeltaI_def}). We also introduce an important property of $\delta I(s,\vec H_0)$.
\begin{lemma}[Hardening of $\delta I(s,\vec H_0)$]
\label{lem:HardeningDI}
In the limit $M\rightarrow \infty$ the quantity $\delta I(s,\vec H_0)$ converges with high probability to
\begin{eqnarray}\label{eq:Ierg(s)_gen_def_app}
\delta I_{erg}(s) = \frac{\Delta I_{erg}(s)}{M} = \frac{I_{erg}(s)-I_{erg}(0)}{M}
\end{eqnarray}
where $I_{erg}(s)$ is defined in (\ref{eq:Ierg_gen_def}), (\ref{eq:rs_gen_def}), (\ref{eq:ts_gen_def}).
\end{lemma}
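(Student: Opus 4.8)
The plan is to split the normalized log-determinant difference into its two constituent MIMO mutual informations and to control each by the familiar two-step ``concentration $+$ deterministic equivalent'' argument: first show that each normalized log-det is self-averaging, i.e. concentrates around its expectation, and then show that the expectation converges to the fixed-point expression in (\ref{eq:Ierg_gen_def})--(\ref{eq:ts_gen_def}). Concretely, from (\ref{eq:DeltaI_def}) and (\ref{eq:deltaI(s,H0_def)}) I would write
\begin{equation}
\delta I(s,\vec H_0) = \frac{1}{M}F_s(\vec G) - \frac{1}{M}F_0(\vec G),
\end{equation}
where $F_s(\vec G)=\ln\det[z(\vec I_M+sp_0\vec R_0)+\vec H_0\vec H_0^\dagger]$, $F_0(\vec G)=\ln\det[z\vec I_M+\vec H_0\vec H_0^\dagger]$, and $\vec G$ collects the i.i.d. complex Gaussian entries of the $\vec G_k$ on which $\vec H_0$ depends linearly. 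Throughout I keep $z>0$ (the MMSE value $z=1$) and recover the ZF case $z\to 0^+$ only at the end via the $p_k\to\infty$ correspondence noted in the Remark following Proposition \ref{lemma:pdf}; this is the device that tames the non-empty kernel of $\vec H_0\vec H_0^\dagger$.

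For the concentration step I would regard $F_s$ as a function of the real Gaussian coordinates in $\vec G$ and bound its gradient through the resolvent. Differentiation gives entries of the form $\tr[(z(\vec I_M+sp_0\vec R_0)+\vec H_0\vec H_0^\dagger)^{-1}\,\partial(\vec H_0\vec H_0^\dagger)]$, and since the argument matrix is bounded below by $z\vec I_M$ its inverse has operator norm at most $1/z$ uniformly in $M$. Collecting the squared partials and using $\tr \vec R_k = M$ shows that $F_s$ is Lipschitz with an $\bigoh(1)$ constant in the appropriately scaled Gaussian variables, so the Gaussian log-Sobolev/Poincar\'e inequality yields $\var(F_s)=\bigoh(1)$ together with a sub-Gaussian tail. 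Consequently $\var(\delta I)=\bigoh(1/M^2)$ and
\begin{equation}
\prob\!\left(\left|\delta I(s,\vec H_0)-\ex\,\delta I(s,\vec H_0)\right|>t\right)\le 4\,e^{-c M^2 t^2},
\end{equation}
so $\delta I$ concentrates on its mean at an exponential rate for every fixed $s$ in the analyticity strip $\Re(s)>-\lambda_{min}^*$ identified in Appendix \ref{app:proof_gamma_vs_DI}.

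The remaining step is to show that $\ex\,\delta I(s,\vec H_0)$ matches $\delta I_{erg}(s)$ to leading order in $M$. This is precisely the deterministic-equivalent statement: for a sum of independent correlated Gaussian blocks $\vec R_k^{1/2}\vec G_k$, the expected log-det converges, after normalization by $M$, to the value of the fixed-point system whose unique solution is $r_k(s),t_k(s)$ in (\ref{eq:rs_gen_def})--(\ref{eq:ts_gen_def}). I would invoke the established deterministic equivalents for correlated MIMO mutual information \cite{Moustakas2003_MIMO1, Hachem2006_GaussianCapacityKroneckerProduct, Couillet2009_CapacityAnalysisMIMO}, applying them once with the shift $sp_0\vec R_0$, so that $\tfrac1M\ex F_s\to \lim_M \tfrac1M I_{erg}(s)$, and once without, so that $\tfrac1M\ex F_0\to \lim_M \tfrac1M I_{erg}(0)$. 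Subtracting then gives $\ex\,\delta I(s,\vec H_0)-\delta I_{erg}(s)\to 0$.

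Combining the two steps yields $\delta I(s,\vec H_0)\to\delta I_{erg}(s)$ with high probability, as claimed. The genuinely hard part is the deterministic-equivalent convergence of the mean together with the uniqueness of the fixed point; fortunately this is exactly what the cited references supply, so the work here reduces to verifying that their hypotheses hold for our block structure (finitely many blocks $K$, spectra of the $\vec R_k$ converging to proper limits, and $n_k$ fixed). The only point specific to ZF is the exchange of the $z\to 0^+$ and $M\to\infty$ limits, which I would justify using the continuity of $r_k,t_k$ in $p_k$ exploited in the Remark, allowing one to pass to $z=0$ after the large-$M$ limit has already been taken.
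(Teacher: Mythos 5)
Your overall architecture---split $\Delta I$ into its two log-determinants, prove concentration via a Gaussian--Lipschitz argument, then identify the mean with the fixed-point system (\ref{eq:rs_gen_def})--(\ref{eq:ts_gen_def}) by invoking the deterministic-equivalent literature---is sound, and for $s\geq 0$ it is essentially a fleshed-out version of what the paper does: the paper gives no proof of its own, but cites \cite{Couillet2009_CapacityAnalysisMIMO}, which establishes exactly this statement for $s\in\RR^+$, and then explicitly \emph{assumes} the lemma remains valid on the strip $\Re(s)>-\lambda_{min}^*$, pointing to \cite{Paul2009_NoEigenvaluesOutsideSupport} only for the special cases $K=1$ or all $\vec R_k$ equal. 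So for nonnegative $s$ your proposal is fine (modulo the mild extra assumption that the spectral norms $\|\vec R_k\|$ stay bounded, which your gradient bound needs).

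The genuine gap is your claim that the concentration step holds ``for every fixed $s$ in the analyticity strip $\Re(s)>-\lambda_{min}^*$.'' Your Lipschitz bound rests on $z(\vec I_M+sp_0\vec R_0)+\vec H_0\vec H_0^\dagger\succeq z\vec I_M$, hence $\left\|A^{-1}\right\|\leq 1/z$ deterministically; this inequality is true only when $s\geq 0$. For $s<0$---which is precisely the regime the lemma is needed for, since the saddle point $s_0$ solving (\ref{eq:gamma_solution}) is negative whenever $\gamma>\gamma_{erg}$, i.e., on the entire upper tail of the SINR distribution---the matrix $z(\vec I_M+sp_0\vec R_0)$ has eigenvalues below $z$, possibly negative, and positivity of the full argument matrix is rescued only by $\vec H_0\vec H_0^\dagger$, a random spectral event. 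There is then no uniform bound on the resolvent: the Lipschitz constant degrades as the smallest eigenvalue of the sum approaches zero, and controlling that event is exactly what requires the ``no eigenvalues outside the support'' machinery of \cite{Paul2009_NoEigenvaluesOutsideSupport}, available only in the restricted cases above. This is the reason the paper states the extension to $\Re(s)<0$ as an assumption rather than a theorem; your write-up silently papers over the same difficulty. A secondary issue: in the ZF case your concentration rate carries a constant that blows up like $1/z$, so interchanging $M\to\infty$ with $z\to 0^+$ (equivalently $p_k\to\infty$) requires uniformity in $z$ that your argument does not supply---continuity of $r_k,t_k$ in $p_k$ controls the limit of the \emph{means}, but not, by itself, the fluctuations.
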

This Lemma was proved in \cite{Couillet2009_CapacityAnalysisMIMO} for the case $s\in \RR^+$. We will assume it is valid for  $\Re(s)>-\lambda_{min}^*$. We should mention that for the case $K=1$, or for the case of equal correlation matrices $\vec R_k=\vec R_0$ ($k=1,\ldots,K$), the generalization to $\Re(s)<0$ can be inferred from \cite{Paul2009_NoEigenvaluesOutsideSupport}. From the above result and using the linearity of the derivative operation, we can deduce the ``hardening''of all derivatives of $\delta I(s,\vec H_0)$ with respect to $s$.
\begin{corollary}{(Hardening of Derivatives of $\delta I(s,\vec H_0)$)} \label{lemma:HardeiningDerivativesDI}
In the limit $M\rightarrow \infty$ the derivatives of $\delta I(s,\vec H_0)$ with respect to $s$ converge with high probability to their deterministic equivalents, which are the corresponding derivatives of $I_{erg}(s)$, defined in (\ref{eq:Ierg_gen_def}), (\ref{eq:rs_gen_def}), (\ref{eq:ts_gen_def}).
\end{corollary}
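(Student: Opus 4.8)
The plan is to upgrade the scalar hardening of Lemma \ref{lem:HardeningDI} to the hardening of every $s$-derivative by exploiting the fact that, for each fixed realization of $\vec H_0$, the map $s \mapsto \delta I(s,\vec H_0)$ is holomorphic on the half-plane $\Re(s) > -\lambda_{\min}(\vec H_0)$. The mechanism is that, for sequences of holomorphic functions, locally uniform convergence transfers automatically to all derivatives, so it suffices to promote the almost-sure pointwise convergence of Lemma \ref{lem:HardeningDI} to locally uniform convergence on compact subsets of the analyticity domain. Note that mere ``linearity of the derivative operation'' does not by itself license differentiating a limit; analyticity is what makes the step rigorous.

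First I would fix the domain. As established in Appendix \ref{app:proof_gamma_vs_DI}, only the first log-determinant in (\ref{eq:DeltaI_def}) depends on $s$, and $\delta I(s,\vec H_0)$ is analytic precisely when $\Re(s) > -\lambda_{\min}(\vec H_0)$, with $\lambda_{\min}(\vec H_0)\to\lambda_{\min}^*$ almost surely. Hence, for any compact $\mathcal K \subset \{\Re(s) > -\lambda_{\min}^*\}$, with probability tending to one all functions $\delta I(\cdot,\vec H_0)$ are holomorphic on a fixed open neighborhood of $\mathcal K$ for $M$ large, and so is the limit $\delta I_{erg}$. Next I would establish a normal-family bound: on $\mathcal K$ the random functions $\delta I(\cdot,\vec H_0)$ are bounded, with high probability, by a deterministic constant independent of $M$, since $\frac{1}{M}\ln\det[z(\vec I_M + sp_0\vec R_0) + \vec H_0\vec H_0^\dagger]$ is controlled by the extreme eigenvalues of $\vec H_0\vec H_0^\dagger$ and of $\vec R_0$, both almost surely bounded away from the singularity while $s$ stays in $\mathcal K$. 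Given this bound, Lemma \ref{lem:HardeningDI} (pointwise a.s. convergence on a set with an accumulation point) together with Vitali's theorem yields locally uniform convergence $\delta I(s,\vec H_0)\to\delta I_{erg}(s)$ on $\mathcal K$.

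Then comes the Weierstrass/Cauchy step. Writing $\partial_s^n \delta I(s,\vec H_0) = \frac{n!}{2\pi\iq}\oint_{|w-s|=\rho} \frac{\delta I(w,\vec H_0)}{(w-s)^{n+1}}\,dw$ and the analogous formula for $\delta I_{erg}$ along a small circle contained in $\mathcal K$, the locally uniform convergence of the integrands forces $\partial_s^n \delta I(s,\vec H_0)\to \partial_s^n \delta I_{erg}(s) = I_{erg}^{(n)}(s)/M$ with high probability, uniformly on compacts, which is exactly the claimed hardening of all derivatives.

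The main obstacle is the normal-family step: pointwise almost-sure convergence alone does not produce a normal family, so one must supply a high-probability, $M$-uniform bound on $\delta I$ over $\mathcal K$, and in particular control the behaviour near the edge $\Re(s)\to -\lambda_{\min}^*$, where the log-determinant can diverge. This is precisely where the assumption from Appendix \ref{app:proof_gamma_vs_DI} that $\lambda_{\min}(\vec H_0)$ converges almost surely to $\lambda_{\min}^*$ with no eigenvalues escaping the bulk support \cite{Paul2009_NoEigenvaluesOutsideSupport} does the real work. As an alternative that sidesteps complex analysis, one could differentiate (\ref{eq:DeltaI_def}) directly---each $\partial_s^n\delta I$ is a normalized trace of products of the resolvent $(z(\vec I_M+sp_0\vec R_0)+\vec H_0\vec H_0^\dagger)^{-1}$ with $\vec R_0$---and apply the deterministic-equivalent and concentration machinery of \cite{Couillet2009_CapacityAnalysisMIMO} termwise, with the derivatives of the fixed-point equations (\ref{eq:rs_gen_def}), (\ref{eq:ts_gen_def}) identifying the limits as $I_{erg}^{(n)}(s)/M$.
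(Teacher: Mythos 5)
Your proposal is correct, and it takes a genuinely different --- and substantially more rigorous --- route than the paper. The paper's entire argument for this corollary is one sentence: hardening of the derivatives is said to follow from Lemma \ref{lem:HardeningDI} ``using the linearity of the derivative operation.'' You are right to flag that this, taken literally, is not a proof: linearity of differentiation does not license exchanging the large-$M$ limit with the $s$-derivative, and some uniformity or analyticity input is indispensable. Your argument supplies exactly the missing machinery: (i) analyticity of $s\mapsto\delta I(s,\vec H_0)$ on $\Re(s)>-\lambda_{\min}(\vec H_0)$, which the paper does establish in Appendix \ref{app:proof_gamma_vs_DI}; (ii) an $M$-uniform high-probability bound on compact subsets, giving a normal family --- this uses the almost-sure control of the extreme eigenvalue via \cite{Paul2009_NoEigenvaluesOutsideSupport}, and is indeed the only genuinely delicate point, since the log-determinant degenerates as $\Re(s)\to-\lambda_{\min}^*$; (iii) Vitali's theorem to upgrade pointwise hardening to locally uniform convergence; and (iv) the Cauchy integral formula to transfer convergence to every derivative. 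This is the standard rigorous path in the random-matrix literature, and your alternative route --- differentiating $\delta I$ directly so that each derivative is a normalized trace of products of the resolvent with $p_0\vec R_0$, then invoking the deterministic equivalents of \cite{Couillet2009_CapacityAnalysisMIMO} termwise --- is equally viable and closer in spirit to how such claims are verified in the cited references. One small technical mismatch to tidy up: Lemma \ref{lem:HardeningDI} as stated gives convergence ``with high probability'' at each fixed $s$, whereas Vitali requires pointwise convergence on a set with an accumulation point simultaneously; this is repaired either by using the almost-sure versions available in \cite{Couillet2009_CapacityAnalysisMIMO} on a countable set, or by a routine subsequence argument. In short, what the paper's one-line deduction buys is brevity; what your argument buys is an actual proof.
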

From the convexity of the function $\Delta I(s,\vec H_0)$ with respect to $s$, we can deduce that $I''_{erg}(s)<0$.

To show Proposition \ref{lemma:pdf}, we start by expressing the probability density function of $\gamma$ as follows
\begin{eqnarray}\label{eq:app_PDF_def1}
    P_M(\gamma) &=& M\int_{-i\infty}^{i\infty} \frac{ds}{2\pi i} e^{Ms\gamma} g_M(s) \\
    &=& \ex_{\vec H_0}\left[M\int_{-i\infty}^{i\infty} \frac{ds}{2\pi i} e^{Mf(s)}\right]
\end{eqnarray}
where
\begin{equation}\label{eq:f(s)_def}
    f(s, \vec H_0)=s\gamma-\delta I(s,\vec H_0)
\end{equation}
Keeping in mind that in the large $M$ limit $\delta I(s,\vec H_0)=O(1)$ we proceed to {\em first} integrate over $s$ before averaging over $\vec H_0$. Since for $\Re(s)>-\lambda_{min}(\vec H_0)$ $f(s)$ is analytic, we deform the contour of the $s$-integral to pass through the saddle point(s) of $f(s)$ from the steepest descent path \cite{Bender_Orszag_book}, which are defined by $f'(s_0)=0$ or
\begin{eqnarray}
\gamma &=& \delta I'(s_0) \nonumber \\
&=& \frac{p_0}{M}\tr \left[\vec R_0 \left(\vec I_M + p_0 s_0\vec R_0+\vec H_0 \vec H_0^\dagger z^{-1}\right)^{-1}\right]
\label{eq:app_saddle_point_s_eq}
\end{eqnarray}
It is easy to see that the above equation only has real solutions for $Re(s_0)>-\lambda_{min}(\vec H_0)$. This is so, because in this region the right-hand-side is real only if $s_0$ is real. Also, since the right-hand-side above is a decreasing function of $s$, (becoming unbounded when $s\rightarrow -\lambda_{min}(\vec H_0)$ and going to zero when $s\rightarrow \infty$), it can also be shown that it can only have one solution, which depends on $\gamma$. Hence for $M\rightarrow \infty$ we expect the resulting limiting equation $\gamma = \delta I'_{erg}(s_0)$ to have a single real solution for $s>-\lambda_{min}^*$.

For large $M$, the integral will dominated by the behavior close to the saddle point. As a result, we may expand the exponent close to $s_0$. Thus
\begin{eqnarray}
f(s) &=& f(s_0) + \sum_{k=2}^\infty (-1)^k (s-s_0)^kf_k
\\ \nonumber
f_k &=& \frac{1}{M}\tr \left[\left(p_0\vec R_0 \left(\vec I_M + p_0 s_0\vec R_0+\vec H_0 \vec H_0^\dagger z^{-1}\right)^{-1}\right)^k\right]
\label{eq:f_k}
\end{eqnarray}
Since $f_2<0$ the steepest descent path in the neighborhood of $s_0$ is $s=s_0+it$, $t\in \RR$.  Keeping the first non-trivial term in the expansion of $f(s)$ (\ref{eq:f_k}) in the exponent, we expand the rest obtaining an expansion of the form
\begin{equation}\label{eq:expansion}
    \frac{M}{2\pi} e^{Mf(s_0)} \int_{-\infty}^\infty dt e^{-\frac{M}{2}|f_2|t^2} \left(1+\sum_{q=1}^\infty M^q A_q(t)\right)
\end{equation}
where the function $A_q(t)$ can be expressed as an expansion of $t$, with the minimum degree $3q$ if $q$ is even and minimum degree $3q+1$ if $q$ is odd. Integrating over $t$ and performing simple power counting of $M$ we conclude that to leading order in $M$ we have
\begin{eqnarray}\label{eq:app_PDF2}
    P(\gamma) &=& \sqrt{\frac{M}{2\pi}}\ex_{\vec H_0}\left[\frac{e^{M(s_0\gamma-\delta I(s_0,\vec H_0))}}{\sqrt{\left|\delta I''(s_0,\vec H_0)\right|}} \left(1+O(\frac{1}{M})\right)\right]
\end{eqnarray}
A number of comments are due for this expression. First, at least in principle, $\delta I(s_0,\vec H_0)$ and all its derivatives (given by $f_k$) are functions of the realization of $\vec H_0$, directly or through $s_0$, which is the solution of (\ref{eq:gamma_solution}). Nevertheless, from Corollary \ref{lemma:HardeiningDerivativesDI} we can replace the derivatives of $\delta I(s_0,\vec H_0)$ with their deterministic equivalents to leading order. As a result, to leading order in $M$ we have
\begin{eqnarray}\label{eq:app_PDF3}
    \overline{P}_M(\gamma) &=& \frac{\sqrt{M}e^{s_0\gamma}}{ {\sqrt{2\pi \left|\ex\left[\delta I''(s_0)\right]\right|}}} \ex_{\vec H_0}\left[e^{-M\delta I(s_0,\vec H_0)}\right] \\ \nonumber
    &=& \frac{\sqrt{M}e^{s_0\gamma}}{ {\sqrt{2\pi \left|\ex\left[\delta I''(s_0)\right]\right|}}} g(s_0) \\ \nonumber
\end{eqnarray}
To conclude the calculation, we need an expression of $g(s_0)$. Clearly, the ``hardening'' of the mutual information itself $\delta I(s_0,\vec H_0)$ has also been shown elsewhere. However, here we need an expression accurate to $O(1/M)$, hence we will need the next, i.e. $O(1)$ correction. This correction can be evaluated using the fact that $\Delta I(s_0,\vec H_0)$ is a difference of two MIMO mutual information functions with noise covariance matrix that differs by $s_0p_0\vec R_0$. We can then take advantage of a number of works in the literature that has analyzed the statistics of mutual information functions.
\begin{lemma}[CLT for $\Delta I(s,\vec H_0)$]\label{lemma:DI_Gaussian}
In the limit $M\rightarrow +\infty$, $N_k\rightarrow +\infty$ (for $k=1,\ldots,K$), such that $n_k=N_k/M$ remains finite, and for $s\geq -\lambda_{min}^*$ the quantity $\Delta I(s,\vec H_0)$ in (\ref{eq:DeltaI_def}) becomes asymptotically normal. In particular,
\begin{equation}\label{eq:Gaussian_DI_lemma}
    \frac{\Delta I(s,\vec H_0)-\Delta I_{erg}(s)}{\sqrt{v_1(s)}} \xrightarrow{M\to \infty} {\cal N}(0,1)
\end{equation}
where $\Delta I_{erg}(s_0)$ and its related parameters are given by (\ref{eq:Ierg_gen_def}). The variance $v_1(s)$ of $\Delta I_{erg}(s)$ is given by
\begin{eqnarray}\label{eq:app_variance}
    v_1(s) &=&  \var(I(s,\vec H_0)) + \var(I(0,\vec H_0)) \\ \nonumber
    &-& 2\cov(I(s,\vec H_k),I(0,\vec H_0))    \\ \nonumber
    &=& -\log\det\left|\vec I_{K}-\vec \Pi_{2}\vec \Sigma_{2}\right| \\ \nonumber
    &-& \log\det\left|\vec I_{K}-\vec \Pi_{0}\vec \Sigma_{0}\right| \\ \nonumber
    &+& 2\log\det\left|\vec I_{K}-\vec \Pi_{1}\vec \Sigma_{1}\right|
\end{eqnarray}
The elements of the positive-definite matrices $\vec \Pi$ and $\vec \Sigma$ are given below
\begin{eqnarray}\label{eq:app_SigmaPi_def}
    \Sigma_{2,ab} &=&  \delta_{ab} \left(\frac{p_a}{z+p_a t_a(s_0)}\right)^2 \\ \nonumber
    \Pi_{2,ab} &=& \frac{1}{N_a}\tr\left[\vec R_a\vec Q(s_0)^{-1}\vec R_b \vec Q(s_0)^{-1}\right] \\ \nonumber
    \Sigma_{0,ab} &=&  \delta_{ab} \left(\frac{p_a}{z+p_a t_a(0)}\right)^2 \\ \nonumber
    \Pi_{0,ab} &=& \frac{1}{N_a}\tr\left[\vec R_a\vec Q(0)^{-1}\vec R_b \vec Q(0)^{-1}\right] \\ \nonumber
    \Sigma_{1,ab} &=&  \delta_{ab} \frac{p_a^2}{(z+p_a t_a(s_0))(1+p_a t_a(0))} \\ \nonumber
    \Pi_{1,ab} &=& \frac{1}{N_a}\tr\left[\vec R_a\vec Q(s_0)^{-1}\vec R_b \vec Q(0)^{-1}\right] \\ \nonumber
\end{eqnarray}
for $a,b=1,\ldots,K$ and $z=1$ ($z=0$) for the MMSE (ZF) cases. The matrix $\vec Q(s)$ is defined as
\begin{eqnarray}\label{eq:app_Q_def}
    \vec Q(s) &=&  \vec I_M + p_0s\vec R_0+\sum_k \vec R_k r_k(s)
\end{eqnarray}
For convenience we generalize the above notation to include $\Pi_{2,ab}$, when any of its indices $a,b$ can take the value $0$, in which case the corresponding matrix $\vec R_a$ (and/or $\vec R_b$) becomes $\vec R_0$ and $N_{a=0}\rightarrow M$.
\end{lemma}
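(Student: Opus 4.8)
The plan is to exploit the representation of $\Delta I(s,\vec H_0)$ in (\ref{eq:DeltaI_def}) as a difference of two MIMO mutual information functionals that share the same random matrix $\vec H_0$. Writing $I(s,\vec H_0)=\ln\det[z(\vec I_M+sp_0\vec R_0)+\vec H_0\vec H_0^\dagger]$ and $I(0,\vec H_0)=\ln\det[z\vec I_M+\vec H_0\vec H_0^\dagger]$, each is a smooth linear statistic of the eigenvalues of the multiply-correlated Gram matrix $\vec H_0\vec H_0^\dagger=\sum_{k=1}^K p_k\vec R_k^{1/2}\vec G_k\vec G_k^\dagger\vec R_k^{1/2}$. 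The asymptotic normality of such a single functional, together with the convergence of its mean to the ergodic value $I_{erg}$ of (\ref{eq:Ierg_gen_def}), is already available from the literature on Gaussian fluctuations of MIMO mutual information \cite{Moustakas2003_MIMO1, Hachem2006_GaussianCapacityKroneckerProduct, Couillet2009_CapacityAnalysisMIMO}. The task is therefore to promote these marginal statements to a joint CLT for the pair $\bigl(I(s,\vec H_0),I(0,\vec H_0)\bigr)$ and to read off the variance of the difference.

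First I would establish joint asymptotic normality by the Cram\'er--Wold device: for arbitrary fixed $\theta_1,\theta_2$ I consider the linear combination $\theta_1 I(s,\vec H_0)+\theta_2 I(0,\vec H_0)$, which is again a smooth functional of the Gaussian matrices $\{\vec G_k\}$. Applying the Gaussian calculus---integration by parts to identify the first two cumulants, and the Poincar\'e--Nash inequality to bound the variance of the derivatives and thereby show that the third and higher cumulants are $O(1/M)$---yields asymptotic normality of every such combination, hence joint normality of the pair, and in particular normality of $\Delta I(s,\vec H_0)$ as claimed in (\ref{eq:Gaussian_DI_lemma}).

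The second ingredient is the explicit variance $v_1(s)=\var(\Delta I)=\var(I(s))+\var(I(0))-2\cov(I(s),I(0))$. Each of the three terms is computed by Gaussian integration by parts on the $\vec G_k$: differentiating a resolvent trace of $\vec Q(\cdot)^{-1}$ from (\ref{eq:app_Q_def}) generates a self-consistent ``ladder'' in which the fluctuation of user $a$'s contribution feeds back through the common resolvent into user $b$, with bare vertex $r_a=p_a/(z+p_a t_a)$ and inter-user susceptibility $\tfrac{1}{N_a}\tr[\vec R_a\vec Q^{-1}\vec R_b\vec Q^{-1}]$. Resumming this geometric series over the $K$ users produces the closed form $-\log\det|\vec I_K-\vec\Pi\vec\Sigma|$, with the matrices of (\ref{eq:app_SigmaPi_def}) evaluated at the two resolvents $\vec Q(s)$ and $\vec Q(0)$: the pairs $(\vec\Pi_2,\vec\Sigma_2)$, $(\vec\Pi_0,\vec\Sigma_0)$, $(\vec\Pi_1,\vec\Sigma_1)$ correspond respectively to the $(s,s)$, $(0,0)$ and mixed $(s,0)$ combinations, assembling exactly into (\ref{eq:app_variance}). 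As a consistency check, the resulting log-normal moment $\ex[e^{-\Delta I}]=e^{-\Delta I_{erg}+v_1/2}$ is precisely the $O(1)$-accurate evaluation of $g(s_0)$ needed to close the PDF computation in (\ref{eq:app_PDF3}).

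I expect the main obstacle to be twofold. The delicate calculational point is the covariance term: one must verify that the cross-fluctuations of the two log-dets, taken at different spectral arguments, organize into the single mixed kernel $\vec\Pi_1\vec\Sigma_1$ and resum with the same geometric structure as the two variances, rather than generating additional cross contributions. The analytical point is validity in the extended range $s\geq-\lambda_{min}^*$, where the shifted matrix $z(\vec I_M+sp_0\vec R_0)+\vec H_0\vec H_0^\dagger$ may approach the lower edge of the spectrum of $\vec H_0\vec H_0^\dagger$; there the boundedness of the relevant resolvents, and hence the applicability of the Gaussian calculus and the invertibility of $\vec I_K-\vec\Pi\vec\Sigma$, rests on the no-eigenvalue-outside-the-support result invoked in Appendix \ref{app:proof_gamma_vs_DI} and \cite{Paul2009_NoEigenvaluesOutsideSupport, Couillet2009_CapacityAnalysisMIMO}.
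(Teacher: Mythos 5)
Your plan is sound in outline, but it follows a genuinely different route from the paper's. The paper does not formally prove this lemma at all: it explicitly states so, and instead motivates the result by appealing to the replica-trick computation of \cite{Moustakas2003_MIMO1}, which treats the difference $\Delta I(s,\vec H_0)$ directly as a single object, evaluates its variance $v_1(s)$ (including the covariance structure, since that computation was built for differences of log-dets sharing the same random matrix), and argues that all higher cumulants decay as inverse powers of $M$; the formal single-log-det CLTs of \cite{Hachem2006_GaussianCapacityKroneckerProduct, Hachem2008_Capacity_INDCorrelation} are cited only as corroboration. You instead propose promoting those marginal CLTs to a joint CLT for the pair $\left(I(s,\vec H_0),I(0,\vec H_0)\right)$ via Cram\'er--Wold, with Gaussian integration by parts and the Poincar\'e--Nash inequality controlling the cumulants, and a ladder resummation yielding the three $-\log\det\left|\vec I_K-\vec\Pi\vec\Sigma\right|$ terms --- i.e., you extend the rigorous Hachem--Loubaton--Najim machinery to the two-argument setting, which is precisely the step the paper leaves to the literature. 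What each buys: your route is a path to an actual proof, and your two flagged obstacles (verifying that the cross-fluctuations resum into the single mixed kernel $\vec\Pi_1\vec\Sigma_1$, and validity down to $s\geq-\lambda_{min}^*$ via the no-eigenvalue-outside-the-support results) are exactly the right ones; the paper's replica route delivers the variance formula and the decay of all cumulants in one sweep, at the price of rigor. One caveat on your setup: calling each log-det a linear spectral statistic of $\vec H_0\vec H_0^\dagger$ is imprecise, since the deterministic shift $zsp_0\vec R_0$ makes $I(s,\vec H_0)$ depend on eigenvectors as well; it becomes a linear spectral statistic only after absorbing $(\vec I_M+sp_0\vec R_0)^{-1/2}$ into the channel, which replaces each $\vec R_k$ by $(\vec I_M+sp_0\vec R_0)^{-1/2}\vec R_k(\vec I_M+sp_0\vec R_0)^{-1/2}$. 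This does not break your argument --- the Gaussian calculus acts on functionals of the $\vec G_k$, not on the spectrum --- but it does mean you cannot shortcut the work by citing an off-the-shelf joint CLT for two test functions of a single spectrum.
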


Although we do not formally prove this lemma, we will briefly motivate its validity and discuss how one can go about to prove it. The Gaussian behavior of MIMO mutual information functions was first introduced in \cite{Moustakas2003_MIMO1}, where in addition to the ergodic mutual information of the form appearing here, the variance of the difference of two mutual informations in both of which the same random matrix appears was calculated using the replica trick with both complex and Grasmann variables. Using this methodology the variance $v_1(s_0)$ above was evaluated. Furthermore it was shown that all higher cumulant moments vanish as increasing inverse powers of $M^{-1}$. This shows that $\Delta I(s,\vec H_0)$ converges to a Gaussian variable in the large $M$ limit. Similar results have been shown using more formal arguments for the case of a single mutual information function with Kronecker-correlated Gaussian channels in \cite{Hachem2006_GaussianCapacityKroneckerProduct} or with independent but not identically distributed channels \cite{Hachem2008_Capacity_INDCorrelation}.

Armed with the above result, we can now integrate over the channel $\vec H_0$ by changing variables, from $\vec H_0$ to the random Gaussian variable $Z=(\Delta I(s_0,\vec H_0)-\Delta I_{erg}(s_0))/\sqrt{v_1(s_0)}$. The reason we shift from $\Delta I(s_0,\vec H_0)$ to $Z$ is because we know from the analysis above that it is $Z$ that becomes asymptotically Gaussian. It is also the case that $g_M(s)$ itself involves the expectation of an exponentially small quantity ($e^{-\Delta I(s_0,\vec H_0)}$) when $M$ is large, hence its average is not necessarily well defined\footnote{For example, even if $Z$ above is asymptotically Gaussian the expectation $\ex\left[e^{-MZ}\right]$ is not well defined.}
\begin{eqnarray}\label{eq:app_PDF4}
    g(s_0)e^{\Delta I_{erg}(s_0)} &=& \ex_{\vec H_0}\left[e^{-\left(\Delta I(s_0,\vec H_0)-\Delta I_{erg}(s_0)\right)}\right] \\ \nonumber
    &=& \ex_{Z}\left[e^{-Zv_1(s_0)}\right] \\ \nonumber
    &=& \int_{-\infty}^{\infty} \frac{dz}{\sqrt{2\pi}} e^{-\frac{z^2}{2}-z v_1(s_0)}\left(1+O(M^{-1})\right) \\ \nonumber
    &=& e^{\frac{v_1(s_0)}{2}} (1+O(M^{-1}))
\end{eqnarray}
The corrections of order $O(1/M)$ stem from a number of sources. Specifically, the correction to $\Delta I(s_0,\vec H_0)-\Delta I_{erg}(s_0)$ is $O(1/M)$ \cite{Moustakas2003_MIMO1}, while the correction to the variance $v_1(s_0)$ is $O(1/M^2)$ \cite{Moustakas2003_MIMO1}. Both these corrections result to an $O(1/M)$ correction to the above result. Also, for finite large $M$ we may incorporate corrections to the Gaussian approximation by including the higher order statistics, e.g. the skewness \cite{Bouchaud_book_FinancialRiskDerivativePricing}. Here again the leading contribution stems from the skewness, which is $O(1/M)$ \cite{Moustakas2003_MIMO1}.

The expressions in Lemma \ref{lemma:DI_Gaussian} allow us to express the second derivative of $\delta I_{erg}(s_0)$ with respect to $s$ as follows:
\begin{eqnarray}\label{eq:app_saddle_pt_second_deriv}
\delta I''_{erg}(s_0)&=& -p_0\sum_{j=1}^{K} \Pi_{2,0j} \frac{dr_j(s_0)}{ds} -p_0^2\Pi_{2,00}\\ \nonumber
&=&-p^2_0\sum_{k,j=1}^{K} \Pi_{2,0j} \left(\vec \Sigma_2\left[\vec I_{K} - \vec \Pi_2 \vec\Sigma_2\right]^{-1}\right)_{jk} \Pi_{2,0k} \\ \nonumber
&& -p_0^2 \Pi_{2,00} \\ \nonumber
&\equiv& -e^{-v_2(s_0)}
\end{eqnarray}
The second equality follows from the expression of the derivatives of $r_k(s)$, $t_k(s)$ in (\ref{eq:rs_gen_def}), (\ref{eq:ts_gen_def}) with respect to $s$ in terms of the matrices $\vec \Pi_2$, $\vec \Sigma_2$. Finally, the last line above defines $v_2(s)$ in (\ref{eq:PDF_sinr1}).


\begin{figure*}
\begin{center}
\subfigure[BER for MMSE SINR]
{\includegraphics[width=0.49 \textwidth]{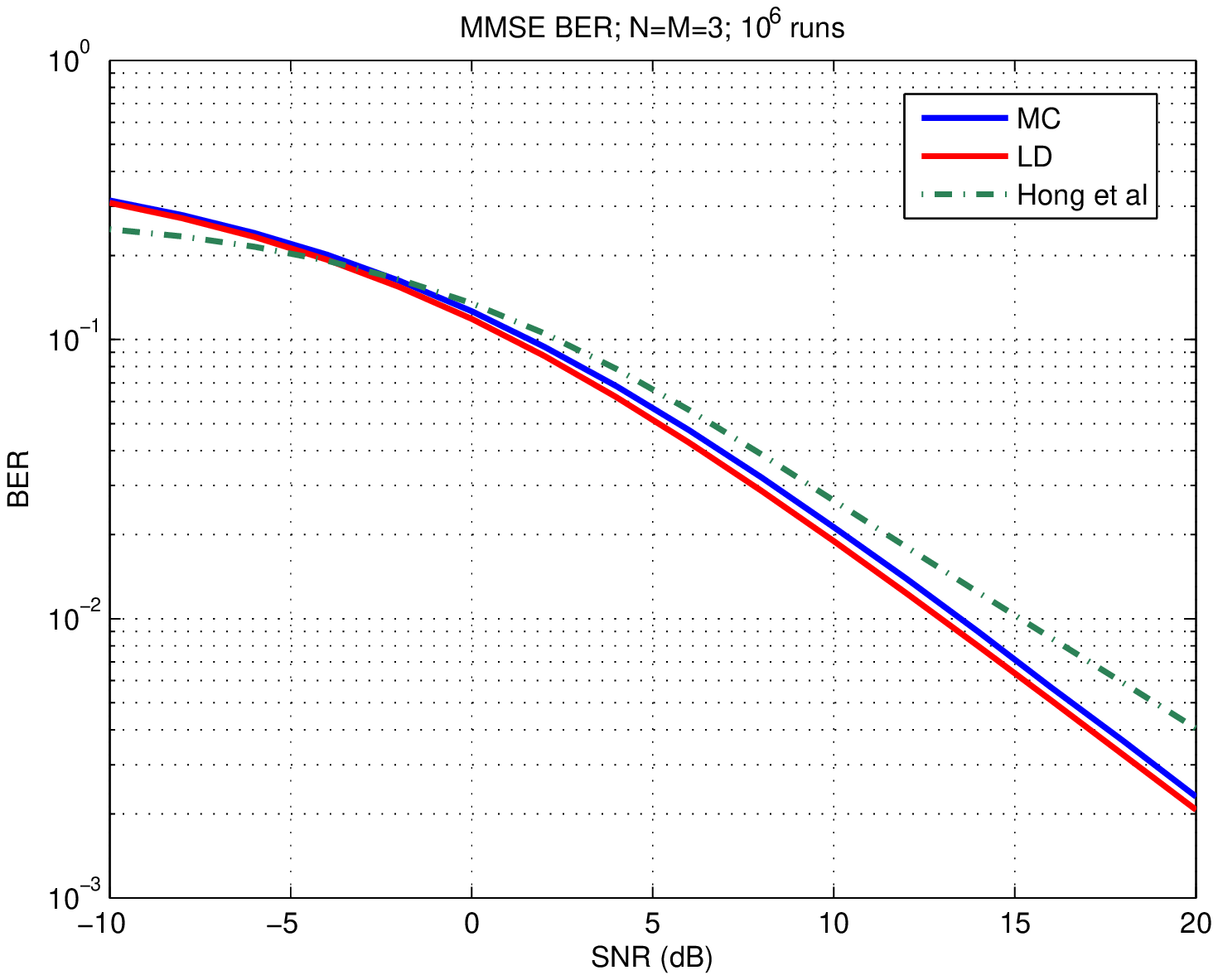}}
\subfigure[BER for ZF SINR]
{\includegraphics[width=0.49 \textwidth]{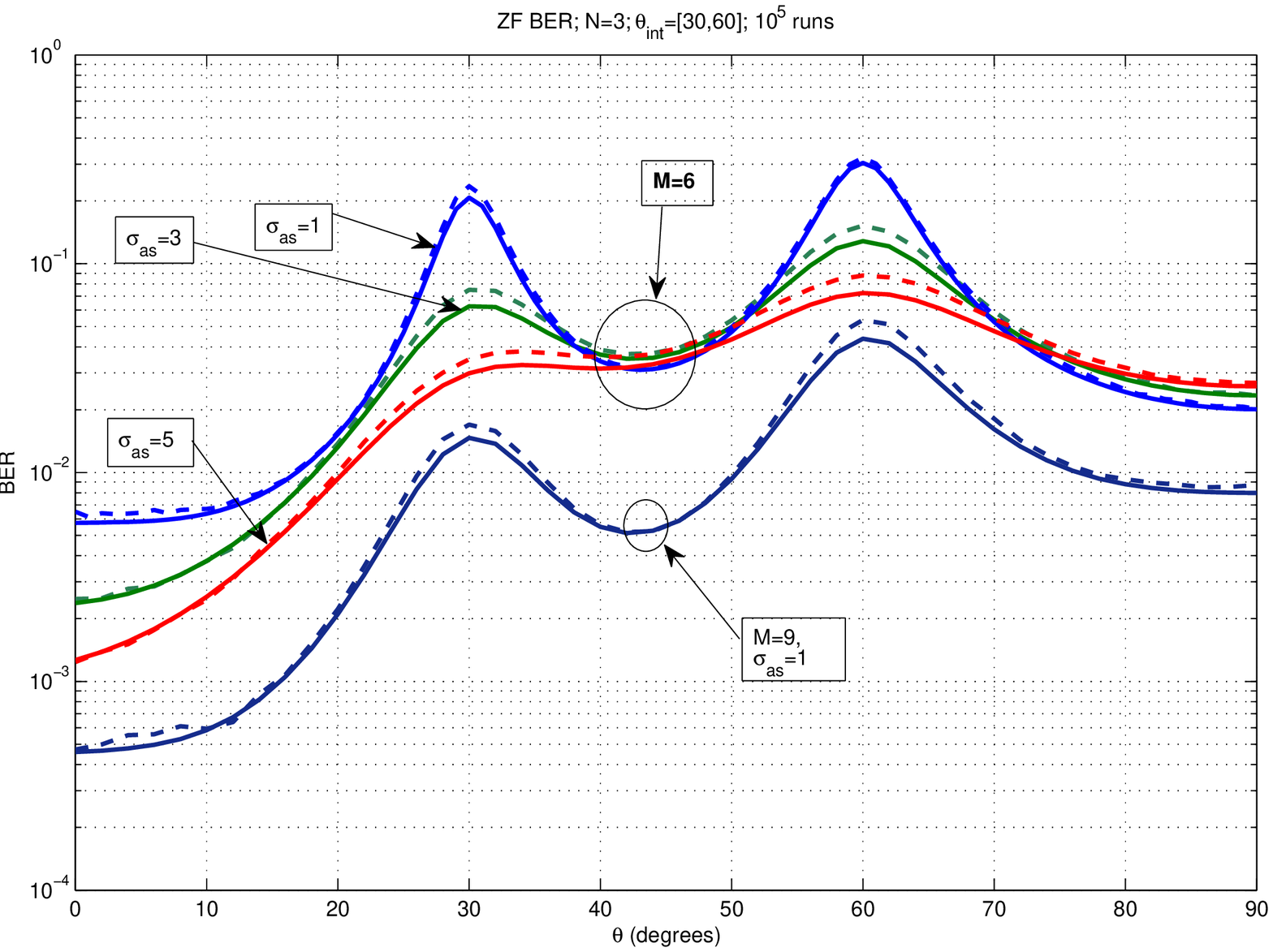}}
\caption{Uncoded BER for (a) MMSE and (b) ZF SINR. In the MMSE case, we take $M=N=3$ and compare to Monte Carlo simulations and generalized gamma distributions with iid channels. In the ZF case, we take $N=3$ and $M=6$ or $M=9$. We plot the BER as a function of the angle of the signal path in the presence of two interfering paths arriving at $\theta=30,60^o$, with various angle spreads $\sigma_{as}=1^o, 3^o,5^o$. The dashed lines in (b) are the Monte Carlo simulation results.}
\end{center}
\label{fig:BER_curves}
\end{figure*}

\section{Proof of Proposition \ref{lemma:cdf}}
\label{app:proof_cdf}

We will now provide some details in the proof of (\ref{eq:CDF_sinr1}). We will deal only with the case $\gamma<\gamma_{erg}$, since the opposite case $\gamma>\gamma_{erg}$ can be analyzed in a similar way. $P_{M,out}(\gamma)$ is defined as
\begin{eqnarray}
\label{eq:CDF_proof1}
  P_{M,out}(\gamma) &=& \int_0^\gamma dt \overline{P}_M(t)
\end{eqnarray}
up to negligible corrections $O(M^{-1})$ due to replacing $\overline{P}_M(t)$ for $P_M(t)$. The analysis is based on the fact that for large $M$ the outage probability $P_{M,out}(\gamma)$ is determined from the behavior of $P_M(t)$ close to the We will need to focus separately in two regions of interest in the interval $\gamma\in(0,\gamma_{erg}]$. In the first region $|\gamma-\gamma_{erg}|=O(1)$, to asymptotically evaluate the outage probability we expand the exponent of $\overline{P}_M(t)$ (\ref{eq:PDF_sinr1}) in $\gamma$ around the end point of the integral. Since $\overline{P}_M(\gamma)$ is an increasing function for $\gamma<\gamma_{erg}$ its derivative will be always positive in this region. Hence we have
\begin{eqnarray}
\label{eq:CDF_proof2}
  P_{M,out}(\gamma) &\approx&  \int_0^\gamma \frac{dt}{\sqrt{2\pi}}  e^{Ms_0(t-\gamma)} \\ \nonumber
  && \cdot e^{Ms_0\gamma-\Delta I_{erg}(s_0)+\frac{v_1(s_0)+v_2(s_0)}{2}} \\ \nonumber
   &=&  \frac{e^{Ms_0\gamma-\Delta I_{erg}(s_0)+\frac{v_1(s_0)+v_2(s_0)}{2}} }{\sqrt{2\pi }Ms_0} \left(1+O(M^{-1})\right)
\end{eqnarray}
where $s_0$ above is evaluated at the endpoint $\gamma$. We have used the fact that to leading the derivative of the exponent {\em with respect to $\gamma$} is simply $Ms_0$. The above approximation begins to break down when $|s_0|\sqrt{M}\ll 1$, i.e. in the region $\sqrt{M}|\gamma-\gamma_{erg}|=O(1)$. Although this situation will rarely occur when we take the limit $M\rightarrow \infty$ for {\em fixed $0<\gamma<\gamma_{erg}$} it useful to pay attention to this region so that we can provide an approximation that is valid for every $\gamma$ when $M$ is large but fixed. In this increasingly diminishing region as $M\rightarrow \infty$, the Gaussian approximation of the SINR is  valid, where the probability density of $P_M(\gamma)$ will be  approximately quadratic in $\gamma$.  Hence we expand the exponent of $\overline{P}_M(t)$ to second order around the endpoint $t=\gamma$, and then integrate over $t$. After some algebra and using the fact that
\begin{equation}\label{eq:s_prime_def}
    \frac{ds_0(\gamma)}{d\gamma}=\frac{1}{\delta I_{erg}''(s_0(\gamma))}
\end{equation}
we obtain  (\ref{eq:CDF_sinr1}). To obtain the expression in (\ref{eq:CDF_sinr2}) we express $P_{out}(\gamma_0)=1-\prob(\gamma>\gamma_0)$ and work as above with $\prob(\gamma>\gamma_0)$. The final expressions (\ref{eq:CDF_sinr1}), (\ref{eq:CDF_sinr2}) smoothly interpolate between (\ref{eq:CDF_proof2}) (for $|\gamma-\gamma_{erg}|=O(1)$) and the Gaussian approximation (for $|\gamma-\gamma_{erg}|\sqrt{M}=O(1)$).

\section{Proof of Proposition \ref{lemma:ber}}
\label{app:proof_ber}

The average uncoded bit-error rate (BER) for signals with modulation as in (\ref{eq:MQAM_def}) can be expressed in terms of the moment-generating function $g(is)$ as follows
\begin{eqnarray}\label{eq:app_proof_ber1}
BER_M &=& b\int_0^\infty d\gamma P_M(\gamma) Q(\sqrt{a\gamma}) \\ \nonumber
 &=& \frac{b}{2}\left(1-\int_{-\infty}^{+\infty} \frac{ds}{2\pi i} \frac{g_M(is)}{s+i0^+}\frac{1}{\sqrt{1-i2sM/a}}\right) \\ \nonumber
 &=& \frac{b}{2\pi}\int_{0}^{\infty}  \frac{dt}{\sqrt{t}} \frac{g_M\left(\frac{a(1+t)}{2M}\right)}{t+1}
\end{eqnarray}
In the first equation the parameters $a$, $b$ correspond to the different cases in (\ref{eq:MQAM_def}). The second equation results from the definition of $P_M(\gamma)$ in terms of the moment-generating function. The third equation follows by deforming the integral from the real axis to follow the branch cut appearing due to the square root. Using (\ref{eq:app_PDF4}) to express $g_M(s)$ in terms of $\Delta I_{erg}(s)$ etc, we get
\begin{eqnarray}\label{eq:app_proof_ber2}
BER &=& \frac{b}{2\pi}\int_0^\infty \frac{dt}{\sqrt{t}(1+t)} e^{-\Delta I_{erg}\left(\frac{a(1+t)}{2M}\right)+\frac{1}{2}v_1\left(\frac{a(1+t)}{2M}\right)} \nonumber \\
&=& \frac{b}{2\pi}e^{-\Delta I_{erg}\left(\frac{a}{2M}\right)+\frac{1}{2}v_1\left(\frac{a}{2M}\right)} \\ \nonumber
&&\cdot \int_0^\infty \frac{dt}{\sqrt{t}(1+t)} e^{-\frac{a}{2M}I'_{erg}\left(\frac{a}{2M}\right)t}(1+O(M^{-1}))
\end{eqnarray}
In the second line above we have expanded the exponent for small arguments and kept only the $O(M)$ and $O(1)$, neglecting all lower order terms. Integrating the above expression over $t$ gives (\ref{eq:lemma_ber1}). It should be noted that if we wanted to be strict regarding the leading corrections being $O(M^{-1})$, in the above expression the arguments of $v_1(s)$, $I_{erg}(s)$ and $I'_{erg}(s)$ should be set to $s=0$, rather than $s=a/(2M)$. Nevertheless, we have found numerically that these expressions are slightly more accurate.



\end{document}